\newtheorem{theorem}{Theorem}
\newtheorem{lemma}{Lemma}
\def\BibTeX{{\rm B\kern-.05em{\sc i\kern-.025em b}\kern-.08em
    T\kern-.1667em\lower.7ex\hbox{E}\kern-.125emX}}
\newcommand{\prot}{{\sc DEMBA}}
\begin{document}

\title{\underline{D}ecentralized \underline{E}xchange that \underline{M}itigate a \underline{B}ribery \underline{A}ttack
}


\author{Nitin Awathare \\IIT Jodhpur\\nitina@iitj.ac.in}

\maketitle

\begin{abstract}
Despite the popularity of Hashed Time-Locked Contracts (HTLCs) because of their use in wide areas of applications such as payment channels, atomic swaps, etc, their use in exchange is still questionable. This is because of its incentive incompatibility and susceptibility to bribery attacks. 

State-of-the-art solutions such as MAD-HTLC (Oakland'21) and He-HTLC (NDSS'23) address this by leveraging miners' profit-driven behaviour to mitigate such attacks. The former is the mitigation against passive miners; however, the latter works against both active and passive miners. However, they consider only two bribing scenarios where either of the parties involved in the transfer collude with the miner. 

In this paper, we expose vulnerabilities in state-of-the-art solutions by presenting a miner-collusion bribery attack with implementation and game-theoretic analysis. Additionally, we propose a stronger attack on MAD-HTLC than He-HTLC, allowing the attacker to earn profits equivalent to attacking naive HTLC.

Leveraging our insights, we propose \prot, a game-theoretically secure HTLC protocol resistant to all bribery scenarios. \prot\ employs a two-phase approach, preventing unauthorized token confiscation by third parties, such as miners. In Phase 1, parties commit to the transfer; in Phase 2, the transfer is executed without manipulation. We demonstrate \prot's efficiency in transaction cost and latency via implementations on Bitcoin and Ethereum.
\end{abstract}

\begin{IEEEkeywords}
component, formatting, style, styling, insert.
\end{IEEEkeywords}

\section{Introduction}
\label{sec:introduction}

Cryptocurrencies such as Bitcoin~\cite{nakamoto2008bitcoin} and Ethereum~\cite{Buterin2015ANG}, which are based on blockchain technology, allow for the secure transfer of tokens without the need for a central authority. They facilitate straightforward internal token transactions and allow for the implementation of sophisticated smart contracts for the conditional transfer of tokens. Smart contracts are created and interacted with through transactions, and blocks containing these transactions are generated by entities known as miners. 
Miners construct and publish blocks, forming the blockchain, validating transactions, and driving the system forward. The system's state is derived by sequentially processing transactions in the order they appear within the blocks, starting from the genesis block to the most recent one. 


The Hashed TimeLocked Contract (HTLC)~\cite{HTLC} is a commonly used smart contract that is compatible with both Ethereum and Bitcoin. It plays a significant role in the Lightning network~\cite{LNPaper}~\cite{1ML}, enabling secure payment routing through multiple payment channels, as well as facilitating contingent payments~\cite{Campanelli2017, Banasik2016, contigentPayment1, Fuchsbauer2019, Bursuc2019}, atomic swaps~\cite{Maurice2018, Malavolta2018, Meyden2019, Mahdi2019, Zie2019}, vault~\cite{Moser2016, McCorry2018, Bishop2019, Zamyatin2019}, and other applications. Essentially, an HTLC is defined by two parameters: a timelock $T$ and a hash lock $H$, which together ensure the conditional transfer of $v$ tokens from the payer (Bob) to the payee (Alice)~\footnote{Please note that throughout this paper, we will use "Alice" and "Payee" interchangeably, as well as "Bob" and "Payer}. Specifically, Alice can spend the $v$ tokens by submitting a transaction containing a pre-image of $H$ to the blockchain before the $T$ timeout, while Bob can spend the tokens after the $T$ time has elapsed. 

Regrettably, the Hashed TimeLocked Contract (HTLC) is susceptible to incentive manipulation attacks known as bribery attacks. This is because HTLC assumes that miners will promptly add Alice's transaction to the blockchain before the $T$ timeout. However, as profit-driven rational agents, miners may not conform to the expected behaviour when incentivized by a malicious Bob. For instance, Winzer et al.~\cite{Winzer2019} have demonstrated that Bob can bribe miners using specific smart contracts to disregard Alice's transactions until the timeout is reached. Likewise, Harris and Zohar~\cite{Harris2020} indicate that Bob can delay the confirmation of Alice's transaction by overloading the system with his own transactions. Both of these strategies enable Bob to acquire the HTLC tokens while denying Alice access to them, even if Alice has already published the preimage.

MAD-HTLC is an ingenious solution put forth by Tsabary et al.~\cite{Tsabary2021} to safeguard HTLC from bribery attacks. This proposal involves requiring the payer to provide collateral, and any misconduct by the payer will result in the miners seizing the collateral. 
However, MAD-HTLC focuses solely on passive mining strategies, whereby miners only conduct standard mining by selecting the best available transaction rather than creating better ones themselves. Due to this limitation, Wadhwa et al.~\cite{SarishtWadhwa2022} put forward an alternative approach called He-HTLC, which is demonstrably secure against both passive and active rational strategies of the miner. This includes strategic actions beyond mining, thereby offering a more comprehensive solution.

However, He-HTLC~\cite{SarishtWadhwa2022} as well as MAD-HTLC~\cite{Tsabary2021} don't consider the bribery scenario where a miner bribes another miner and collude to confiscate the HTLC token\footnote{\bf Note that a miner bribing another miner differs from bribery by other entities, such as Alice or Bob, because miners are not direct participants in the underlying exchange.}, which is prominent in the real-world as miners are rational entities in the network. We demonstrated it by proposing an attack called Miner to Miner Bribery Attack (M2MBA). 

M2MBA exploits He-HTLC when miners bribe another miner and collude to confiscate the tokens involved in Bob's collateral. In He-HTLC, Exchange tokens~\footnote{The tokens that Bob sends to Alice} are transferred to Alice, and collateral is transferred to Bob if Alice reveals his preimage before time $T$. On the other hand, after time $T$, Bob can redeem both exchange tokens and collateral by revealing his preimage. However, if a miner has both the preimages, then they can confiscate Bob's collateral and the exchanged tokens are burned. In M2MBA, miners would wait for Bob to reveal his preimage, say $pre_B$, even if Alice reveals his preimage, say $pre_A$ and hence confiscate Bob's collateral once Bob reveals $pre_B$. Note that a miner can execute this attack alone However, by bribing and colluding with other miners-as done in M2MBA-the miner can achieve a more stable income (demonstrated in Appendix~\ref{apx:soloVsPool}).



Additionally, we propose an attack on MAD-HTLC called the Bribery-Based Block Broadcasting Attack (B3A), which is more severe than the SDRBA and HyDRA attacks proposed by He-HTLC. In B3A, Bob gains more tokens than he would through SDRBA or HyDRA, providing even stronger incentives to launch the attack. In fact, our analysis shows that Bob can extract nearly the same amount of profit as he would by attacking the naive HTLC.

To address these issues, we propose \prot, which ensures that miners cannot confiscate tokens as long as neither Alice nor Bob deviates from the protocol. Simply removing the miner's confiscation power would reintroduce vulnerabilities of naive HTLCs. Instead, \prot\ mitigates this by requiring both parties to post collateral, shifting accountability to the participants themselves. This design is based on the key insight that the core vulnerability in MAD-HTLC and He-HTLC stems from granting enforcement power to an uninvolved third party—the miner.

\prot\ operates in two phases. In the first phase, each party redeems their respective collateral by revealing their preimage, thereby committing to the exchange. Once this commitment is made--by broadcasting a transaction that reveals a specific preimage--they can no longer back out. Specifically, Alice can commit in one of three ways, depending on when and which preimages she reveals (\( pre_A \) and/or \( pre_A' \)). First, revealing \( pre_A \) transfers the exchange amount \( v^{\text{dep}} \) to Alice. Second, revealing only \( pre_A' \) returns \( v^{\text{dep}} \) to Bob. Lastly, revealing both \( pre_A \) and \( pre_A' \) burns \( v^{\text{dep}} \). Alice makes this final commitment only if she previously revealed \( pre_A \), as a means to punish Bob for bribing miners to censor her commitment. This mechanism deters bribery by ensuring that any attempt to suppress a valid transaction leads to a loss for Bob.

On the other hand, Bob has one way to commit--by revealing his preimage \( pre_B \). Once both Alice and Bob have made their respective commitments, the protocol proceeds to the second phase, where the exchange is executed. Our game-theoretic analysis in Section~\ref{sec:DEMBAAnalysis} shows that any deliberate false commitment results in a loss for the deviating party.



Lastly, We implemented and evaluated \prot\ on Bitcoin and Ethereum. On Bitcoin, \prot\ reduces costs by 42\% vs. MAD-HTLC but incurs 17\% more than He-HTLC for Alice's token claim, while lowering Bob's refund cost for both his collateral and exchange tokens by 42\% vs. He-HTLC and 150\% vs. MAD-HTLC. Similar trends hold on Ethereum, though with smaller margins: \prot\ saves 38\% vs. MAD-HTLC and costs 7\% more than He-HTLC for Alice, while reducing Bob's refund cost by 23\% and 71\% vs. He-HTLC and MAD-HTLC, respectively.

Furthermore, We compared \prot\ with He-HTLC and MAD-HTLC in terms of time to complete (TTC), defined as the duration from transaction initiation to completion. \prot\ matches or exceeds their performance while incurring lower costs.


In summary we made following contributions:
\begin{itemize}
    \item We demonstrate two attacks on MAD-HTLC and He-HTLC, showing that Bob can still exploit MAD-HTLC to earn the same tokens as in an attack on naive HTLC.
    
    \item We propose \prot, the first HTLC scheme addressing all bribery scenarios—Alice bribing miners, Bob bribing miners, and miner-to-miner bribery—unexplored in prior work. Our claims are supported through game-theoretic analysis.

    \item We implemented and evaluated \prot\ on Bitcoin and Ethereum, demonstrating its superiority over state-of-the-art protocols in most cases.
\end{itemize}





 The rest of the paper is organized as follows: \S\ref{sec:background} discusses existing work that aims to enhance naive HTLC to resist bribery attacks.
 The motivation is explained in \S\ref{sec:attacks} through the demonstration of attacks on existing work. \S\ref{sec:overview} gives an overview of \prot. The game-theoretic analysis of the M2MBA attack and \prot\ are presented in \S\ref{sec:M2MBAttackAnalysis} and \S\ref{sec:DEMBAAnalysis}, respectively. \prot’s implementation, evaluation, and results are discussed in \S\ref{sec:DEMBAEvaluation}. \S\ref{sec:relatedwork} reviews related work. Finally, the paper concludes with a discussion and future directions in \S\ref{sec:futurework} (In the Appendix)





\section{Background}
\label{sec:background}
To facilitate comprehension and provide context, we will begin by revisiting Hash Time-Locked Contracts (HTLC), outlining the issues associated with the current HTLC protocol, and examining prior attempts to address them before delving into our findings.



\subsection{Naive HTLC and Bribery attack on it}
A naive Hash Time-Locked Contract (HTLC), $C^{dep}_{B->A}$,  involves Bob paying Alice $v^{dep}$ tokens is defined by the tuple ($pk_A$, $pk_B$, $v^{dep}$, $pre_A$, $T$). The $C^{dep}_{B->A}$ indicates that Bob's deposit of $v^{dep}$ tokens which either transferred to Alice when Alice broadcast
signed transaction $tx^{dep}_A$ that reveals $pre_A$ or Bob can redeem by broadcasting a signed transaction $tx^{dep}_B$ after the timeout period $T$. The later one serves as a refund mechanism if Alice remains inactive till time $T$. It is important to note that the preceding description omits implementation- and application-specific details, enabling us to concentrate on the fundamental concerns.



Winzer et al.~\cite{Winzer2019} and Tsabary et
al.~\cite{Tsabary2021} demonstrated attacks on the naive HTLC by illustrating a scenario where Bob bribes miners to withhold $tx^{dep}_A$ until time $T$. Once $T$ is reached, Bob can claim $v^{dep}$ using $tx^{dep}_B$ and offering a transaction fee of $f_B^{dep}$. The profit gained from this attack can be calculated as follows. Let $t_{pub}$ be the time when Alice broadcast $tx^{dep}_A$ offering a transaction fee of $f_A^{dep}$ and $k$ be the number of block generated between time $t_{pub}$ and $T$. It is further assumed that Bob offers a bribe $br > f_A^{dep}$ to the miner, per block, for not including $tx^{dep}_A$. Consequently, Bob gains $v^{dep} - [(k+1)*br+f_B^{dep}+f_B^{C_{Bob}}]$ after bribing the miner, as shown in Figure~\ref{fig:HTLC_Bribery}, where $k = 4$.
We have given the sample smart contract, say $C_{Bob}$, to perform the above mentioned attack in Algorithm~\ref{algo:Bribery-Contract}. In the forthcoming sections, we will demonstrate our proposed attacks on the existing remedies for the pitfalls of the naive HTLC using the modified version of the same contract. 
Given the contract $C_{Bob}$, the attack proceed as follows:
\begin{itemize}
    \item Bob deploy a smart contract $C_{Bob}$, described  in Algorithm~\ref{algo:Bribery-Contract}, and initialize it with $v^{dep}$. This costs Bob the fee of $f_B^{C_{Bob}}$.
    \item Miners mining the block between time $t_{pub}$ and $T$ call {\em RequestBribe} to reserve their potential bribe for censoring $tx_A^{dep}$.
    
    \item After time $T$ Bob broadcast $tx_B^{dep}$ to claim $v^{dep}$ (refund), which miners include in their block. This costs Bob the transaction fee of $f_B^{dep}$.
    \item After $tx_B^{dep}$ is included in the block, any miner can call $claimBribe$ with a $pre_A$, which when executed send the reserved bribe to all the miners who has censored the $tx_A^{dep}$. The miner who include the transaction calling $claimBribe$ will receive the fee of $br$, which is the extra $br$ in the equation $v^{dep} - [(k+1)*br+f_B^{dep}+f_B^{C_{Bob}}]$. 
\end{itemize}

\begin{figure}[t!]
    \centering
    \includegraphics[height=2.65cm, width=0.98\linewidth]{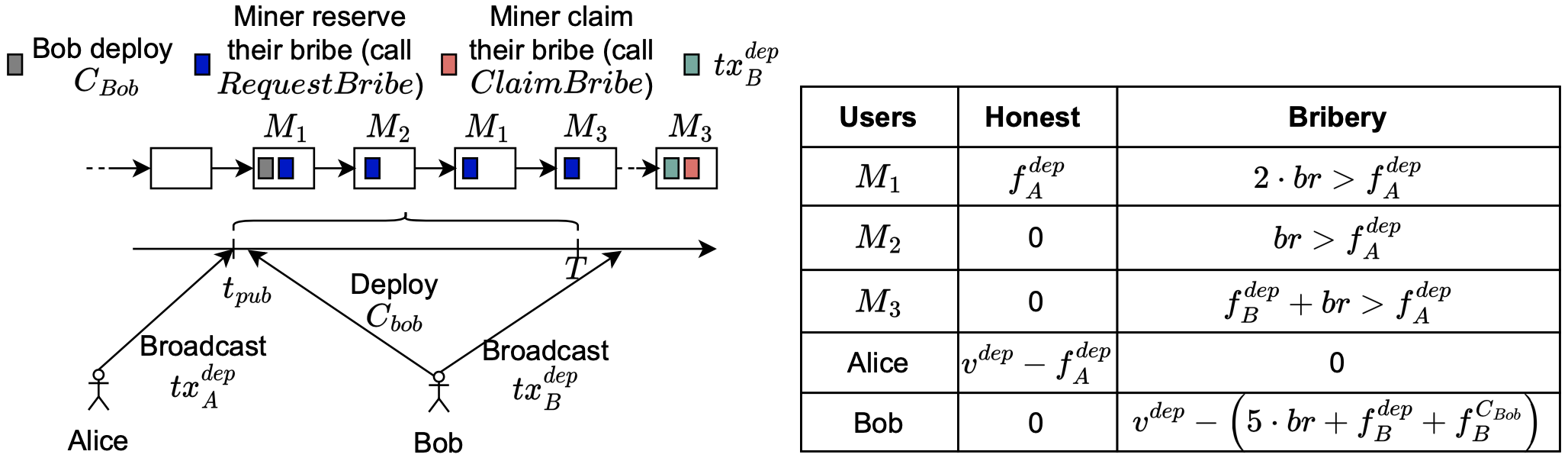}
    \caption{Bobs' attempt to make an bribery attack on the naive HTLC. Bob deploy contract committing a bribe $br$ to the miners for censoring $tx^{dep}_A$, which is ultimately send to miners at time $t > T$ }
    \label{fig:HTLC_Bribery}
\end{figure}

\subsection{Existing work overcoming the bribery attack on naive HTLC (MAD-HTLC and He-HTLC)} 

To address the issue of HTLC bribery, Tsabary et al.~\cite{Tsabary2021} proposed MAD-HTLC, which features two key modifications. Firstly, it adds a second hash lock with preimage $pre_B$, along with an extra redemption path that allows miners to redeem $v^{dep}$ if they possess both $pre_A$ and $pre_B$. Secondly, MAD-HTLC introduces a collateral contract called {\em MH-Col}, initiated by Bob, that contains collateral tokens $v^{col}$ that miners can also seize if they know ($pre_A, pre_B$). These changes are intended to discourage Bob from revealing $pre_B$ unless he genuinely requires a refund, particularly when $pre_A$ has not yet been released.

\begin{figure*}[t!]
    \centering
    \begin{subfigure}{0.29\linewidth}
    \centering
    \includegraphics[height=3.05cm, width=0.98\linewidth]{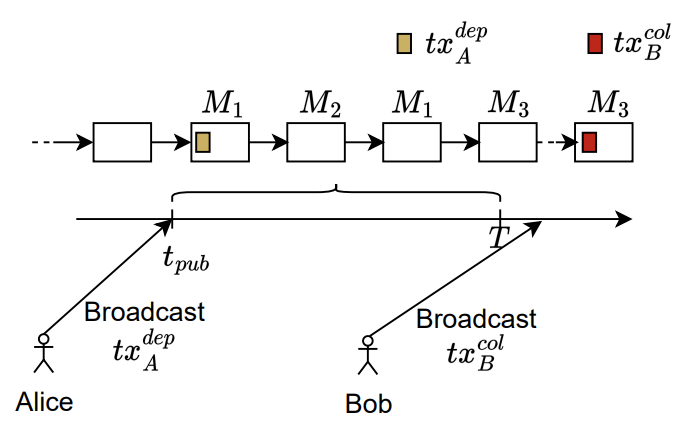}
    \caption{}
    \label{fig:MD_HTLC_Honest}
    \end{subfigure}
    \begin{subfigure}{0.31\linewidth}
    \centering
    \includegraphics[height=3.05cm, width=0.98\linewidth]{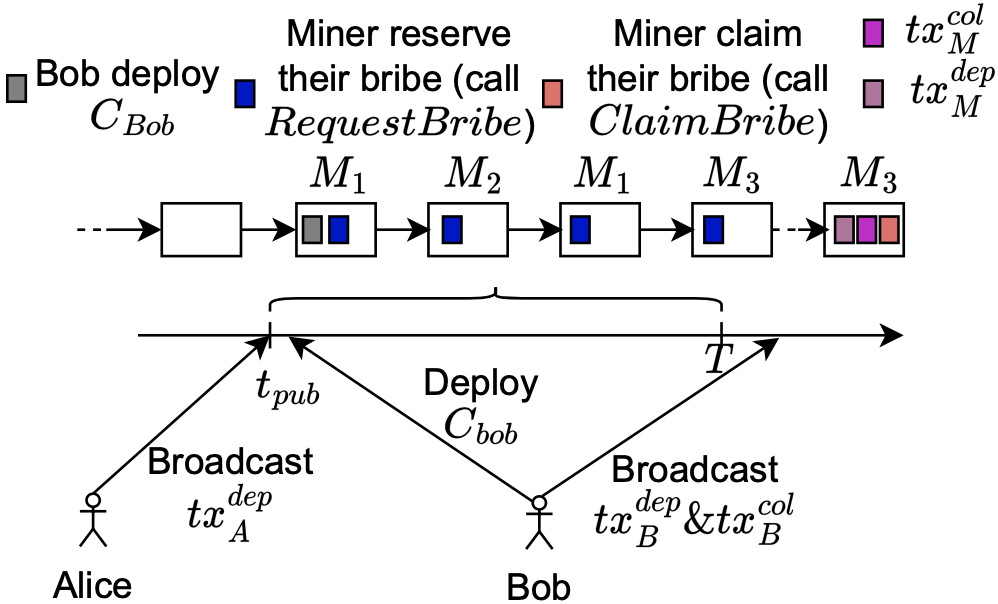}
    \caption{}
    \label{fig:MD_HTLC_Attack}
    \end{subfigure}
    \begin{subfigure}{0.31\linewidth}
    \centering
    \includegraphics[height=3.05cm, width=0.98\linewidth]{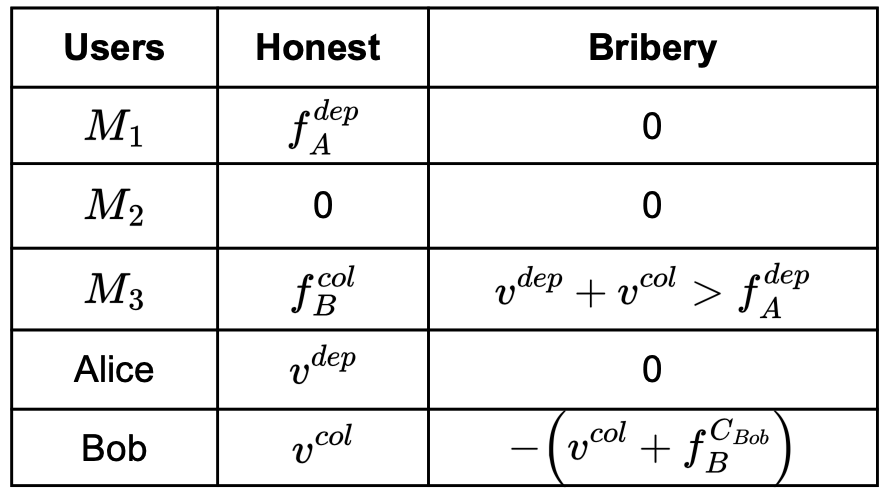}
    \caption{}
    \label{fig:MD_HTLC_table}
    \end{subfigure}
    \caption{(a)Honest execution of MAD-HTLC, (b)Bob perform the bribery attack mentioned above on the MAD-HTLC, (c)Utility gained by each party during the honest and malicious execution of MAD-HTLC}
    \label{MAD_HTLC_execution}
\end{figure*}


In MAD-HTLC, the payment of $v^{dep}$ tokens made by Bob is stored in a contract called MH-Dep, which outlines three redemption paths: 1) {\em dep-A: } where Alice can redeem $v^{dep}$ by broadcasting a signed transaction $tx_A^{dep}$ revealing $pre_A$, 2) {\em dep-B: } where Bob can redeem $v^{dep}$ by broadcasting a signed transaction $tx_B^{dep}$ after time $T$, and 3) {\em dep-M: } where anyone (e.g., a miner) can redeem $v^{dep}$ by broadcasting the transaction $tx_M^{dep}$ revealing both $pre_A$ and $pre_B$. Bob's collateral $v^{col}$ is stored in contract MH-Col, which can be redeemed in two ways: 1) {\em col-B: } where Bob can redeem by broadcasting a signed transaction $tx_B^{col}$ after $T$, and 2) {\em col-M: } where anyone (miner) can redeem by broadcasting the transaction $tx_M^{col}$ revealing both $pre_A$ and $pre_B$. 

As depicted in Figure~\ref{fig:MD_HTLC_Attack}, rational miners at time $t>T$ will not allow Bob to spend $v^{dep}$ using $tx^{dep}_B$ if they know $pre_A$. Instead, they will confiscate both $v^{dep}$ and $v^{col}$ through {\em dep-M} and {\em col-M}, respectively. Consequently, Bob loses not only $v^{col}$ and $v^{dep}$ but also $f^{C_{Bob}}_B$, which the fee charged by the miners to deploy contract $C_{Bob}$. On the other hands, the miner who confiscates ($M_3$ in our example) earns $v^{dep}$ and $v^{col}$, as illustrated in Figure~\ref{fig:MD_HTLC_table}. 



However, MAD-HTLC does not address the possibility of miners actively seeking out other users in the system and bribing them to obtain a higher utility, called as reverse bribery. This is because the contract allows the miners and Bob to redeem $v^{col} + v^{dep}$.
A successful attack could split this total gain such that Bob and miners are individually better off than following the protocol. ~\cite{SarishtWadhwa2022} presents two such attacks that attempt reverse bribery: 1) Success-dependent reverse bribery attack (SDRBA), and 2) Hybrid delay-reverse bribery attack (HyDRA). The details of SDRBA and HyDRA is presented in Appendix~\ref{apx:SDRBA}.

The SDRBA and HyDRA rely on two main factors. First, miners collude with either Bob or Alice to earn a high profit, $v^{dep}+v^{col}$, through {\em dep-M + col-M}. For instance, a miner can collude with Alice to earn $v^{col} - \epsilon$ or with Bob to earn $v^{dep} - \epsilon$, which they could have earned $f^{dep}_A + f^{col}_B$ by acting honestly. Second, the redemption of $v^{dep}$ and $v^{col}$ can occur atomically in a single transaction or in two separate transactions within a block. Considering these observations ~\cite{SarishtWadhwa2022} introduces the He-HTLC to overcome the the above-mentioned attacks (SDRBA and HyDRA). He-HTLC achieve this by proposing two contracts 1) {\em He-Dep}, and 2) {\em He-Col}, like MAD-HTLC. The payment of $v^{dep}$ and $v^{col}$ is stored by Bob in {\em He-Dep}, which can be redeemed in two ways: 1) {\em dep-A: } Transfers $v^{dep}$ to Alice and $v^{col}$ to Bob, when Alice reveals $pre_A$. 2) {\em dep-B: } Deposit $v^{dep} + v^{col}$ to {\em He-Col} after time $T$, if Bob reveals $pre_B$. Similarly, {\em He-Col} can be redeemed in two ways: 1) {\em Col-B: } Transfer $v^{dep} + v^{col}$ to Bob after time $T+l$, and 2) {\em Col-M: } Transfer $v^{col}$ to the miner producing both $pre_A$ and $pre_B$ while $v^{dep}$ are burned.    

With the above amendments, miners high earning is prevented by burning $v^{dep}$ and hence allowing them to confiscate only $v^{col}$ instead of $v^{dep} + v^{col}$. On the other hands, Bob's high earning is prevented by requiring him to disclose the secret $pre_B$ (through the {\em dep-B} path) for $l$ blocks before he can receive $v^{dep} + v^{col}$ via the {\em col-B} path. This separation provides two key benefits. Firstly, if $pre_A$ is unavailable, Bob will receive $v^{dep} + v^{col}$ after a delay of $l$ blocks. Secondly, if $pre_A$ is known, miners can choose between two options: either they can confiscate $v^{col}$ through {\em col-M}, or they can allow Bob to receive $v^{dep} + v^{col}$ and potentially earn more than $v^{col}$ through a bribe from Bob. The authors prove that this bribe is avoidable by selecting an appropriate value of $l$.

\section{Motivation (Attacks on the existing work)}

%
\label{sec:attacks}
We present two attacks: the Bribery-based Block Broadcasting Attack (B3A) and the Miner-to-Miner Bribery Attack (M2MBA). B3A targets MAD-HTLC, allowing Bob to extract nearly the same collateral as in a naive HTLC attack. In contrast, M2MBA targets He-HTLC, showing that miners profit more by attacking, regardless of Alice’s or Bob’s honesty.

\subsection{ Bribery based Block Broadcasting Attack (B3A)} The SDRBA and HyDRA protocols rely on the assumption of a fair exchange between the miner and Bob, wherein the miner only pays Bob a bribe if they can redeem $v^{dep}$ and/or $v^{col}$. To implement this, ~\cite{SarishtWadhwa2022} proposed the following approach: Bob creates a transaction $tx^{dep}_M$ that redeems MH-Dep for $M_i$, sends $h=H(tx^{dep}_M)$ to $M_i$, and provides a proof $\pi$ to demonstrate its correctness. The miner, $M_i$, then verifies the proof and mines a partial block $B$. This block includes: 1) The hash of $tx^{dep}_M$, which is noteworthy as mining is typically performed on a block header that contains a compact representation of the transaction (e.g., a Merkle root that hashes all transactions), and 2) A bribing transaction that pays Bob $br$ tokens from the coinbase, along with any other transactions from the mempool. Finally, the miner sends $B$ to Bob, who checks that the block contains the intended transactions, completes $B$ by adding $tx^{dep}_M$, and broadcasts the finalized block.

By broadcasting the completed block, Bob gains leverage to perform the bribery attack and extract additional collateral—matching what he would gain from bribing a naive HTLC. He finalizes and broadcasts the block only if it includes either of the following, along with unrelated transactions from the pool:

\begin{itemize}
    \item case -1:  a) The bribing coinbase transaction that pays Bob a bribe of $v^{dep}-br$, b) Hash of $tx^{dep}_M$ which redeems $v^{dep}$ for miner, c) $tx_B^{col}$, which redeem $v^{col}$ for Bob at cost of paying a transaction fee of $f^{col}_B$ to the miner, d) Transaction calling {\em claimBribe} of $C_{Bob}$. 
    
    \item case -2:  a) The bribing coinbase transaction that pays Bob a bribe of $v^{dep}+v^{col} - 2br$, b) Hash of $tx^{dep}_M$ and $tx^{col}_M$ which redeems $v^{dep}$ and $v^{col}$ for miner, respectively, c) Transaction calling {\em claimBribe} of $C_{Bob}$.
\end{itemize}

Note that the above-mentioned block may appear after the timeout \( T \). As a result of the attack, Bob must pay a bribe of \( k \cdot br \) to each miner who mines a block between \( t_{\text{pub}} \) and \( T \), and an additional \( br \) to the miner who includes the relevant block after \( T \). Moreover, the miner should receive \( br \) for including \( tx_M^{\text{dep}} \), and either \( f_B^{\text{col}} \) for including \( tx_B^{\text{col}} \) (Case 1), or \( br \) for including \( tx_M^{\text{col}} \), compensating for the fees they forgo by not including unrelated transactions or \( tx_A^{\text{dep}} \) and/or \( tx_B^{\text{col}} \). 
Consequently, Bob's net gain is calculated as $v^{dep} - [(k+2)*br+ (br $ \text{or} $f^{col}_B)+ f_B^{C_{Bob}}]$ where $k$ is the number of blocks mined in time $T - t_{pub}$. Assuming that $f^{col}_A = f^{dep}_B$, Bob can carry out the bribery attack with an additional cost of $br$ compared to the attack on the naive HTLC. 

It is important to note that B3A differs from the Success Independent Reverse Bribery Attack described in He-HTLC, which assumes that the miner refrains from sharing $pre_B$ with any miner other than the briber. In contrast, B3A operates without making this assumption.\\

\subsection{ Miner to Miner Bribery Attack (M2MBA) }
\label{sec:M2MBABrief}
The above solutions proposed in He-HTLC guard when either party (Alice or Bob) colludes with miner. More precisely they guards against following two situations: 
\begin{itemize}
    \item Alice has colluded with a miner to receive greater incentives than she would by behaving honestly. This has been accomplished through the use of $v^{col}<v^{dep}$.
    
    \item Bob has conspired with a miner to receive greater incentives than he would through honest behavior. This is accomplished by permitting the miner to seize only $v^{col}$ and incinerate $v^{dep}$ when both $pre_A$ and $pre_B$ are available.

\end{itemize}


However, although Alice and Bob may act honestly, the possibility of bribery scenarios among miners where one miner bribes others and shares the high earnings gained from the attack is often overlooked in the proposed solutions. Here we will demonstrate that an miner can collaborate with other miners to earn more than they would have earned through honest behavior. If the miner chooses not to disclose $pre_A$ on the chain, other miners may be willing to do the same, even if Alice discloses $pre_A$, as long as there is a potential for greater profit. The detailed attack steps are outlined as follows:

\begin{itemize}
    \item Alice will reveal $pre_A$ to the miner. However, the miner won't include the respective transaction ($tx_A^{dep}$) in the block, hence hiding $pre_A$. It also promises bribes to the other miners for not including $tx^{dep}_A$ in the block.
    \item Miner waits until time $T$, after which Bob will broadcast $tx^{dep}_B$ that reveals $pre_B$ \footnote{Note that in this scenario, Alice and Bob is honest and adheres to the protocol. Since the malicious miner does not reveal any information about $pre_A$ on-chain, Bob will proceed as specified and honestly release $pre_B$ according to the protocol.}. 
    \item Miner will follow the path {\em col-M} as they know $pre_A$ and $pre_B$ to confiscate the $v^{col}$ and send the promised bribe to the other miners who censored $tx^{dep}_A$ between $t_{pub}$ and $T$. We called the miner who confiscate $v^{col}$ a {\em bribing miner}.
\end{itemize}

By behaving honestly each miner can earn $f^{dep}_A$, or $f^{col}_B$, as shown in Figure~\ref{fig:He_HTLC_Honest_A} and~\ref{fig:He_HTLC_Honest_B} respectively. The later one is the case when Alice doesn't send $tx^{dep}_A$ revealing $pre_A$.
However, performing the above-mentioned attack the bribing miner, say $M_i$, will be able to gain $v^{col} - (k-k_{M_i})*br$, where $br > f^{dep}_A$ and $k$ and $k_{M_i}$ is the total number of blocks mined and number of blocks mined by $M_i$ in time $T-t_{pub}$, respectively.
In the example showed in the Figure~\ref{fig:He_HTLC_Honest_M2MAttack}, where $k = 4$, and $k_{M_3} = 1$, $M_3$ is the bribing miner, who confiscate $v^{col}$. $M_3$ also promise the bribe to other miners for censoring $tx^{dep}_A$ by locking $v^{col}$ in contract, say $C_{M2M}$, details of which is explained further in section~\ref{sec:M2MBA}. 
Given this, $M_3$ gained $v^{col} - 3*br > f^{col}_B$, which is depicted in Figure~\ref{fig:He_HTLC_Honest_Attack_comparison_Table}. While other miners, say $M_j$, will gain $br*k_{M_j}$, where $k_{M_j}$ is the number of blocks mined by $M_j$ during time $t_{pub}$ to $T$. 

On the other hands, if $v^{col} - 3*br < f^{col}_B$, then $M_3$ (bribing miner) would prefer to include $tx^{dep}_B$ and earn $f^{col}_B$. However, because we assumed that $v^{col} > f^{dep}_A$ and $v^{col} > f^{col}_B$, it would rarely be the case. This is a legitimate assumption because the transaction fee is much lower than the transaction value.  


Here we have distributed the bribe, where the bribing miner, $M_i$, would get $v^{col} - (k-k_{M_i})*br$ and other miners would receive $br*k_{M_j}$. Another way to distribute bribes is equally distributing $v^{col}$ among all involved in the attack, i.e. bribe miners and miners responsible for censoring $tx^{dep}_A$. With such distribution each miner would earns $v^{col}*k_{M_i}/k$. For ease of explanation in our analysis in section~\ref{sec:M2MBA}, we used this (equal) bribe distribution. To overcome these attacks, we propose \prot, which we discuss next.

\begin{figure*}[t!]
    \centering
    \begin{subfigure}{0.29\linewidth}
    \centering
    \includegraphics[height=3.25cm, width=0.98\linewidth]{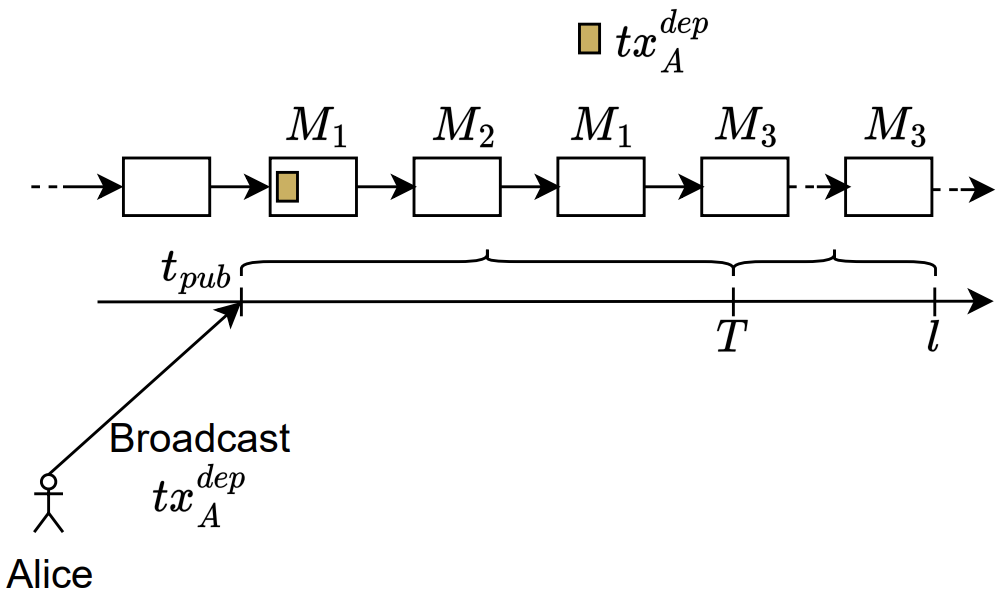}
    \caption{}
    \label{fig:He_HTLC_Honest_A}
    \end{subfigure}
    \begin{subfigure}{0.31\linewidth}
    \centering
    \includegraphics[height=3.25cm, width=0.98\linewidth]{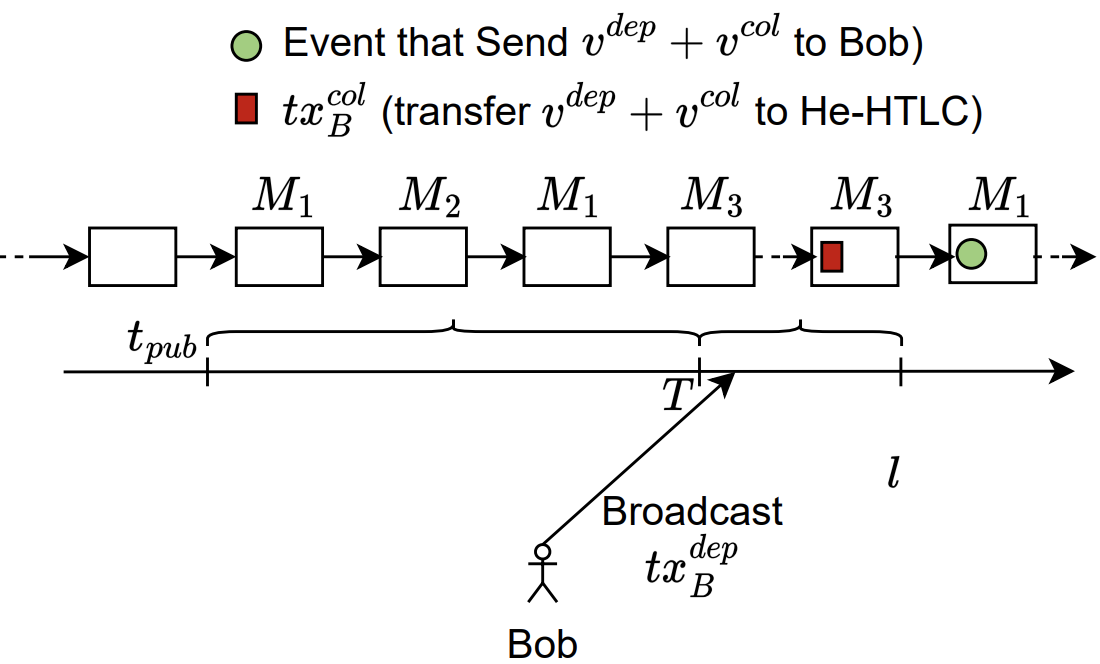}
    \caption{}
    \label{fig:He_HTLC_Honest_B}
    \end{subfigure}
    \begin{subfigure}{0.31\linewidth}
    \centering
    \includegraphics[height=3.25cm, width=0.98\linewidth]{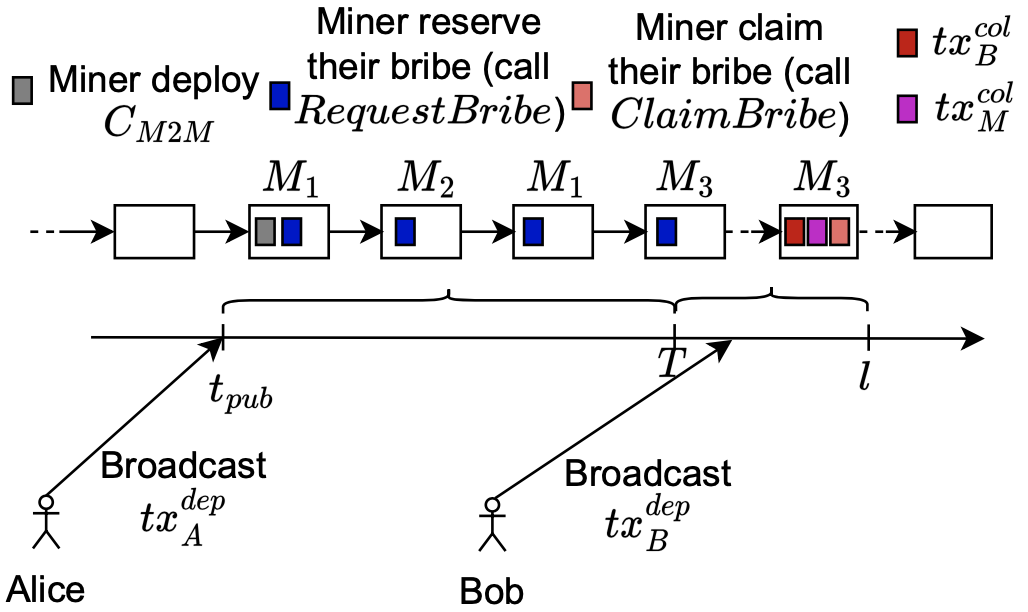}
    \caption{}
    \label{fig:He_HTLC_M2MAttack}
    \end{subfigure}    
    \caption{(a) Honest execution of He-HTLC when Alice release the $pre_A$ before time $T$, (b) Honest execution of He-HTLC when Bob release the $pre_B$ post time $T$, and (c) Miner to miner bribery attack (M2MBA) on He-HTLC.} 
    \label{fig:He_HTLC_Honest_M2MAttack}
\end{figure*}

\begin{figure}
\centering
\includegraphics[height=3.15cm, width=0.81\linewidth]{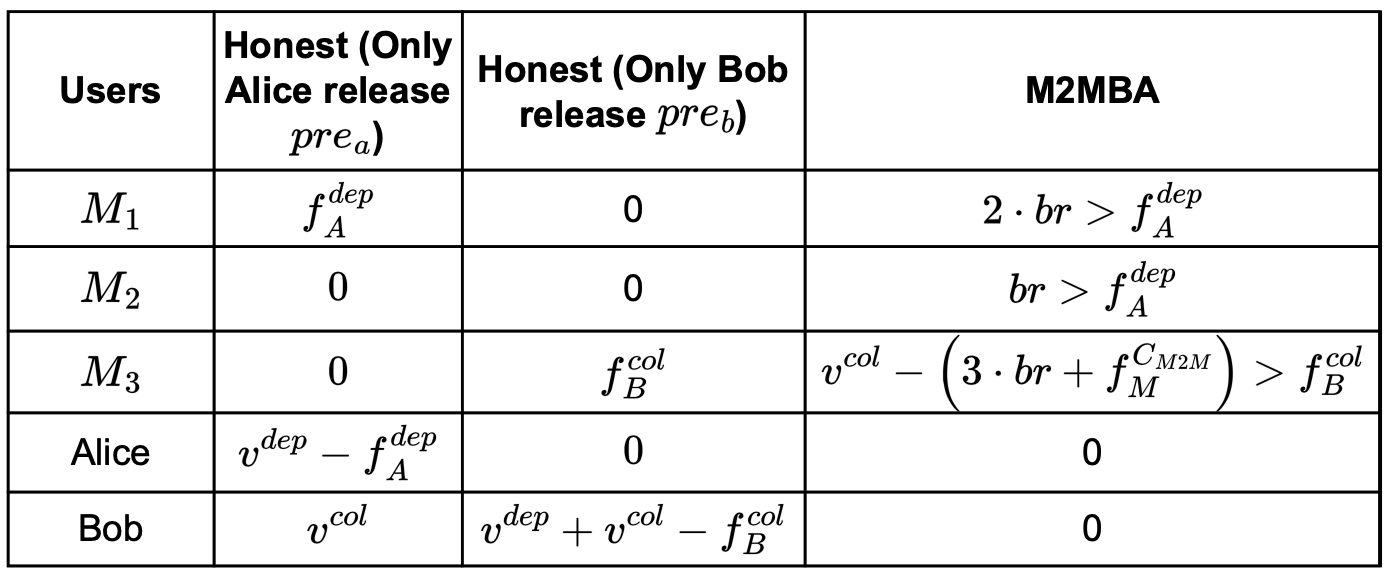}
\caption{Utility gain by each involved party during honest execution of He-HTLC and during M2MBA attack.}
\label{fig:He_HTLC_Honest_Attack_comparison_Table}
\end{figure}

\section{\prot\ Overview}
\label{sec:overview}
We begin this section by asking the question "Can we close the doors for attack using \prot ?". We provide a positive response to the preceding question by introducing a novel HTLC protocol, referred to as \prot, which is designed to be incentive-compatible for actively rational miners, taking into account all possible bribing tactics. The primary concern with the previous protocols (\cite{Tsabary2021} and \cite{SarishtWadhwa2022}) is, they allow the third entity (in our case miner), which is not a part of exchange, to confiscate the tokens ($v^{dep}$ and/or $v^{col}$). Which give them a upper hand and hence enough power to perform the attack that breaks the exchange (transfer).

We propose the solution, which revoke the power of token confiscation from the miner, in case the parties deviate from the protocol. The intuition is both the parties, Alice and Bob, involved in the exchange deposit their token in the collateral contract in addition to the contract that performs exchange. Let $C^{dep}$ be the contract that the Bob deploy to perform the exchange and deposit tokens $v^{dep}$ in it. Furthermore, let $C^{col}_A$ and the $C^{col}_B$ be the collateral contract for Alice and Bob where they deposit tokens $v^{col}_A$ and $v^{col}_B$, respectively. In addition to that, $C^{col}_A$ and $C^{col}_B$ will output the respective values that Alice and Bob used to redeem $v^{col}_A$ and $v^{col}_B$, respectively. Unlike the existing solution, $v^{dep}$ can't be redeemed manually, meaning by sending transaction that discloses the $pre_A$ and/or $pre_B$. Instead $C^{dep}$ will read the secrete from $C^{col}_A$ and $C^{col}_B$ to make a decision on whom to transfer $v^{dep}$, either Alice or Bob. 
Given this, each contract can be redeemed/unlocked in the following ways:
\begin{itemize}
    \item $C^{dep}$: 1) dep-A ($tx^{dep}_{pre_A}$): Transfer $v^{dep}$ to Alice, if $C^{col}_A$ output only $pre_A$ and $C^{col}_B$ output $pre_B$ 
    2) dep-B ($tx^{dep}_{pre_A'}$): 
    Transfer $v^{dep}$ to Bob after time $T$, if $C^{col}_A$ output only $pre_{A}^{'}$ and $C^{col}_B$ output $pre_B$. 
    3) dep-Burn ($tx^{dep}_{pre_{AA}'}$): Burn $v^{dep}$ after time $T$, if $C^{col}_A$ output both $pre_A$ and $pre_{A}^{'}$, irrespective of whether $C^{col}_B$ output $pre_B$ before time $T$.

    \item $C^{col}_A$: Alice has the option to redeem using $pre_A$ and/or $pre_{A}^{'}$. If she redeems before time $T$, she must use $pre_A$ ($tx^{col}_{pre_{A}}$). However, if she has already disclosed the transaction revealing $pre_A$, which was not recorded on the blockchain due to bribery till time $T$, she can choose to redeem using both $pre_A$ and $pre_{A}^{'}$ ($tx^{col}_{pre_{AA}'}$), else she redeem with just $pre_{A}^{'}$ after time $T$ ($tx^{col}_{pre_{A}'}$). 
    Note that, when $C^{col}_A$ is redeemed with $tx^{col}_{pre_{AA}'}$, Alice will earn $v^{col}_A - v^{ded}$.

    \item $C^{col}_B$: Bob can unlock it at anytime using $pre_B$ ($tx^{col}_{pre_{B}}$). However, After time $T$, $v^{col}_B$ is deducted by some coins, say $v^{ded}$.
    
\end{itemize}




In addition to the redemption mechanisms discussed above, \prot\ distinguishes between the transaction fee paid by the sender and the fee actually earned by miners for including a transaction in a block. Let $f^{col}_{pre_{A}}$, $f^{col}_{pre_{A}'}$, $f^{col}_{pre_{AA}'}$, and $f^{col}_{pre_{B}}$ denote the transaction fees paid by Alice and Bob for transactions $tx^{col}_{pre_{A}}$, $tx^{col}_{pre_{A}'}$, $tx^{col}_{pre_{AA}'}$, and $tx^{col}_{pre_{B}}$, respectively. Similarly, let $fm^{col}_{pre_{A}}$, $fm^{col}_{pre_{A}'}$, $fm^{col}_{pre_{AA}'}$, and $fm^{col}_{pre_{B}}$ represent the portion of those fees actually earned by miners for including the respective transactions in a block.

To incentivize honest behavior, \prot\ enforces the following conditions:
\begin{equation}
    f^{col}_{pre_{A}} < f^{col}_{pre_{A}'} < f^{col}_{pre_{AA}'}
    \label{eq:fee}
\end{equation}
\begin{equation}
    fm^{col}_{pre_{A}} > fm^{col}_{pre_{A}'} > fm^{col}_{pre_{AA}'}
    \label{eq:minerFee}
\end{equation}

For $tx^{col}_{pre_{B}}$, miners earn a higher fee if they include it before time $T$; after $T$, their earnings decrease. However, Bob pays a higher fee if the transaction is included after $T$. The increasing gap between fees paid and fees earned by miners is burned, discouraging protocol deviations. For the honest behavior the fee paid by the party and the fee earn by the miner is same, i.e. difference is zero. For example, $f^{col}_{pre_{A}} = fm^{col}_{pre_{A}}$, and $f^{col}_{pre_{B}} = fm^{col}_{pre_{B}}$, if $tx^{col}_B$ is included in the block before $T$. Such fee structure can be easily implemented using condition-based UTXOs in UTXO-based blockchains. In smart contract-based blockchains, it becomes even simpler by incorporating the condition directly within the relevant contract methods.






Given this knowledge, we can conclude that when all parties act honestly, Alice redeems $C^{col}_A$ using $pre_A$, while Bob redeems $C^{col}_B$ using $pre_B$. This provides sufficient information to unlock $C^{dep}$ and transfer $v^{dep}$ to Alice. On the other hand, if Alice fails to redeem $C^{col}_A$ before time $T$, she can still redeem it using $pre_A'$ after $T$, allowing Bob to refund $v^{dep}$. However, if Alice reveals $pre_A$ before $T$ but it is not included in the block, she will publish both $pre_A$ and $pre_A'$ to redeem $C^{col}_A$. In this case, if $C^{dep}$ reads both $pre_A$ and $pre_A'$ from $C^{col}_A$, it will permanently lock or burn $v^{dep}$.

Bob's bribery attack mentioned above doesn't make him gain more than what he would have earned by behaving honestly, i.e. if he bribe miners for not including $tx^{dep}_{pre_A}$ in the block, then after time $T$, Alice will redeem $C^{col}_A$ with both $pre_A$ and $pre_{A}^{'}$. So $C^{dep}$ will burn $v^{dep}$ and Bob end up earning $v^{col}_B - k*br$, where $br>f^{dep}_A$, and $k$ is the number of blocks mined between time $t_{pub}$ and $T$.

Furthermore, If Bob don't release $tx^{col}_B$, revealing $pre_B$, before $T$, he would earn $v^{col}_B - v^{ded}$, neither he gets any share in $v^{dep}$. Similarly, if Alice doesn't release $pre_A$, then he will end up loosing $v^{dep}$.


With this brief understanding of the proposed attacks and \prot\ next, we discuss the detailed analysis of the proposed M2MB attack on the He-HTLC and provide the game theoretic argument to show the gain that miners get by deviating from the protocol. Furthermore, we will provide the analysis of \prot.
\section{M2MBA analysis }
\label{sec:M2MBAttackAnalysis}
This section outlines our system model for analyzing the proposed M2MBA on the He-HTLC and our protocol. Following this, we will delve into the analysis of both.

\subsection{System Model}
The system model we adopt bears resemblance to the one employed in MAD-HTLC~\cite{Tsabary2021} and He-HTLC~\cite{SarishtWadhwa2022}. Our model considers a blockchain-based cryptocurrency that enables token transactions among a group of participants (entities). These participants include Alice, Bob, other users, and a predetermined group of $n$ miners, which we refer to as $M = {M_1, ...., M_n}$.


Each entity (including miners) in the system can create transaction that deploy a contract. The contract deposit tokens and defines the predicate on the token transfer. Furthermore, contract can output specific value, which the other contract can read. Each entity can create transaction that transfer token to other entities. A transaction transferring the token must supply input values such that the contract predicate evaluated over them is true, after which tokens are transferred to the specific entity defined in the contract, and contract is said to be {\em redeemed}. Transaction must satisfy specifically the following predicates, in addition to the other predicates: 1) digital signature $sig$ provided by the transaction matches a public key $pk$ specified in the contract, 2) preimage $pre$ provided by the transaction matches a hash digest $dig$ specified in the contract, i.e., that $H(pre) = dig$.


For the predicate to be imposed correctly we assumed that every participant has access to a digital signature scheme~\cite{Christian2017} with a security parameter of $\mu$ and a hash function, denoted as $H: \{0, 1\}^* \rightarrow \{0, 1\}^\mu$, which maps arbitrary-length inputs to outputs of length $\mu$. The security parameter $\mu$ has been selected such that the standard cryptographic assumptions are satisfied, which implies that the digital signature scheme is resistant to existential forgery  attacks~\cite{Christian2017, Goldwasser1988}, and the hash function $H$ is resistant to preimage attacks~\cite{Dziembowski2018}.

Invalid transactions with a negative predicate value will not be included in any block and will be ignored. For a transaction to be considered valid, the token transferred must not exceed the amount deposited in the contract at the time of deployment. The transaction fee is calculated as the difference between the transferred token amount and the deposited amount.

The created transactions are submitted to the miners. Miners add the received valid transaction in the data structure called {\em mempool}. The transaction in {\em mempool} are called as {\em unconfirmed} transactions, which miner can confirm by including them in the block. Miners extends the blockchain by creating the new block. Each miner $i$ can create a block with a specific rate denoted by $\lambda_i$, such that  $\sum_{n}^{i=1} \lambda_i = 1$. The block appended at the $j^{th}$ height is referred to as a $b_j$. We further assume that block-creation as a discrete-time, memory-less
stochastic process. We also assume that, only one miner mines a block at specific time instance, hence there doesn't exist a transient inconsistencies in our system.

Furthermore, We assume there are two types of miners: {\em active miners and passive miners}. 
Active miners proactively seek out other participants to participate in external protocols, say by deploying contracts, which we are going to see next section (\S\ref{sec:M2MBA}). By doing so, these miners can identify and capitalize on opportunities that maximize their utility in terms of the number of tokens earned. They do so by dividing the gained utilities among the colluding parties. Active rational miners are considered more formidable than their passive counterparts which only uses the transaction information available in the mempool to maximize their utility.

Out of $n$ miners, say $col$ miners are active miners. The consolidated mining power of the active miners is represented as $\lambda_{col}$. Active miners 
agree to share the rewards proportional to their individual mining power. The probability that a specific miner $M_i$ within the colluding group gets the reward, given that one of the colluding miners mines the block, is proportional to their mining power. In other words, the probability that miner $M_i$ receives a reward, given that one of the active colluding miners mines the block, is given by:
P(Miner $M_i$ receives reward/one colluding miner mines) $=\lambda_i/\lambda_{col}$, where P represents probability function.

For the resemblance with the production blockchain, we assume that there exists other users which generate the transaction that are not related to the \prot, we called such transaction as an {\em unrelated} transaction. Each transaction offers a fee to the miner for their inclusion in the block. We further assume that transaction related to the \prot\ offers a higher transaction fee compared to the {\em unrelated} transactions, unless specified otherwise. Since we assume that forks do not exist, it is assumed that a transaction is confirmed as soon as it is included in a block.


Beyond the previous assumptions, we model the system as a game among various entities, including miners. All entities are rational, risk-neutral, and non-myopic, with ties broken randomly. They aim to maximize expected utility, and no discounting is applied—i.e., a payment of $x$ tokens holds equal utility regardless of time. These assumptions are consistent with those in MAD-HTLC and He-HTLC.

\subsection{M2MBA on He-HTLC}
\label{sec:M2MBA}
In this section, we will delve into the M2MBA on the He-HTLC. 
Initially, we will discuss about the attack strategy of active miners. The basic idea is that some miner will deploy the contract $C_{M2M}$, and each miner $M_i$ will lock their collateral of $v^{col}$ in it. Locked collateral acts as a promise for getting bribed for censoring $tx^{dep}_A$. The miner who confiscates the $v^{col}$ following {\em col-M} will send the promised bribe to the other miners for censoring $tx^{dep}_A$. On the other hand, for passive miners, the attack strategy is simply to exclude $tx^{dep}_A$ from the block until time $T$ and include $tx^{col}_M$ as soon as Bob releases $tx^{dep}_B$. Next, we discuss the M2MB attack in detail.


\subsubsection{M2MBA Details: } Let’s say a He-HTLC is established between Bob and Alice, where they both deposits $v^{dep}$ and $v^{col}$, respectively. The attack proceed in three phases 
First, a miner deploy the contract $C_{M2M}$ represented in Algorithm~\ref{algo:M2MBA}, followed by which each miner, lock the amount of $v^{col}$ in the contract by calling {\em LockCollateral}. They can do this before or immediately after Alice publishes $tx^{dep}_A$. Second, each miner mining the block between time $t_{pub}$ to $T$ lock their bribe (bribe they expected to earn in future for censoring $tx^{dep}_A$) by calling {\em RequestBribe}. The idea behind this is that if miners are assured of receiving a profitable amount in the future, they are more likely to choose not to include $tx^{dep}_A$ in the block.

Lastly, immediately after time $T$ (i.e. at time $T+1$), when Bob discloses $pre_B$ via $tx^{dep}_B$, transferring $v^{dep}+v^{col}$ to He-Col, miners vie for its inclusion in the block. 
Note that since Bob is honest, he discloses $pre_B$ through $tx^{dep}_B$ at time $T+1$. Since every miner is aware of both $pre_A$ and $pre_B$, they also compete to add $tx^{col}_M$ to the same block and confiscate $v^{col}$.  
The first miner, say $M_{brb}$ who confiscate the $v^{col}$ from {\em He-Col} by following {col-M} (i.e. by adding $tx^{col}_M$ in the block)
will send the promised bribe to the miners who mine the blocks between time $t_{pub}$ to $T$ for censoring $tx^{dep}_A$. The amount locked in the first step for other miners, except $M_{brb}$, will be unlocked and sent back to them. However, the amount remains after paying bribe is sent to $M_{brb}$, depicted at line 41-43 in Algorithm~\ref{algo:M2MBA}.

\subsubsection{Game Details}
\label{sec:gamesetup}
We model the M2MBA as a game, $\zeta^{mba}$ between Alice, Bob, and miners $M = \{M_1, \cdots , M_n\}$. The game runs for $T+1$ 
rounds, where each round $j$ is represented by a block $b_j$.
Like in~\cite{SarishtWadhwa2022}, without loss of generality, we say the game begins when He-Dep and He-Col contracts are initiated in some block $b_0$. Furthermore, the round during which Alice releases $pre_A$ is identified as $t_{pub}$. 

In every round $k$, each party (Alice, Bob, and Miners) observes the following four game states: 1) {\em red}: {\em He-Dep} remains redeemable; 2) {\em nred-nrev}: He-Dep has already been redeemed and forwarded to He-col, but the miners are unaware of $pre_A$; 3) {\em nred-rev}: {\em He-Dep} has already been redeemed and dispatched to He-col, and some miners are cognizant of $pre_A$, and 4) {\em nred-A}: He-Dep has already been redeemed with $pre_A$. We denote this states set as $S=${{\em red, nred-nrev, nred-rev, nred-A}}.

Each round, $k \in [ t_{pub}, T]$, corresponds to the subgame $\zeta^{mba}(k, s)$, where $s$ represents the state and denotes that $T-k$ more blocks must be produced following this subgame. Hence we look the game as the series of subgames. We use "." as a placeholder when referring to subsets of games, such as $\zeta^{mba}(.,red)$, which denotes the collection of all subgames in which {\em He-Dep} can still be redeemed. Within each subgame, the participating parties may execute certain {\em actions}, which have the potential to alter the game state. Next we will discuss the set of action that each participating entity can take.

Alice complies with the He-HTLC protocol, whereby she has the option to select a round $t_{pub}$ to release $tx^{dep}_A$, offering a fee $f^{dep}_A$ of her preference. Once Alice broadcasts $tx^{dep}_A$, i.e., exposes $pre_A$, and the miner includes $tx^{dep}_A$ in the chain by mining the block, the protocol is completed. Bob's actions do not impact the protocol execution since exposing $tx^{dep}_A$ transfers $v^{dep}$ to Alice and $v^{col}$ to Bob. Alternatively, in the scenario where $tx^{dep}_A$ is not disclosed on-chain, $tx^{dep}_B$ revealed by Bob will be placed on-chain after time $T$. This transaction will transfer the sum of $v^{dep}$ and $v^{col}$ to the {\em He-Col}, which will ultimately (after $l$ rounds) be transferred to Bob if $pre_A$ is not revealed, either by honest or malicious behaviour from Alice.


The actions of miners involve strategically selecting transactions to be mined in order to maximize their utility. The transactions available for miners to choose from are dependent on the current round $k$ as well as their knowledge of $pre_A$ and $pre_B$. In any given subgame $\zeta^{mba}(., .)$, a miner $M_i$ can include unrelated transactions from the mempool and earn a fee of $f$. However, in subgames $\zeta^{mba}(k, red)$, where $T \geq k \geq t_{pub}$, $M_i$ has the option to include $tx^{dep}_A$ for a fee of $f^{dep}_A$ or censor it with the expectation of the higher profit ($>f^{dep}_A$) in future. Similarly, in subgames $\zeta^{mba}(k, red)$ where $k > T$, $M_i$ can choose to include $tx^{dep}_B$ for a fee of $f^{dep}_B$, but only if $tx^{dep}_A$ has not already been included. 

Furthermore, miners action is limited to include only the unrelated transactions to maximize their utility in the subgame $G^{dep}(k$,\\$\text{\em nred-nrev})$.
However, in subgames $G^{dep}(k,\text{\em nred-rev})$, where $k > T$, $M_i$ can create and include a transaction $tx^{col}_M$ that redeems He-Col for itself via path {\em col-M}. Furthemore, as soon as Alice reveals $pre_A$, a miner can decide to pay a pre-agreed $br$ to other miners in exchange for allowing him
to redeem He-col in some future block (after round $T$). 
For ease of exposition, we assume that miners have reached an agreement about the
value of $br$ before or immediately after $t_{pub}$. Miner can independently decide whether or not to accept each bribe $br$, however, each miner receives only one of these bribes, i.e., the bribe from the miner who redeem {\em He-Col}. By undertaking these actions, each participant in the game accumulates certain utilities, which we will delve into in the following discussion.

Each entity's utility, $u_i$, in the game is determined by the tokens they earn at the end of the game, i.e., after $T$ blocks have been created. Specifically, for each entity $i$, the utility function $u_i: \text{Action X } (\mathbb{Z}, \text{States}) \rightarrow \mathbb{R}$ maps to the number of tokens entity $i$ earns at the end of the game. The utility of player $i$ when action $a$ is taken in the game $\zeta^{mba}$ is denoted as $u_i(a, \zeta^{mba})$. The maximum utility that player $i$ can achieve in game $\zeta^{mba}$ is referred to as $u^{max}_i(\zeta^{mba})$. We assume that the utility obtained from a block containing only unrelated transactions is $f$. Therefore, if a transaction related to {\em He-HTLC} with utility $x$ is included in exchange for an unrelated transaction, it would have a utility of $x - f$.

Given above, our game-theoretic analysis demonstrates that it is a dominant strategy for a miner, $M_j$, is to accept a bribe from a bribing miner, if he couldn't redeem {\em He-Col}. Else the dominant strategy is to redeem {\em He-Col} than following the {\em He-HTLC} protocol.


\subsubsection{Analysis} 
\label{sec:M2MBADerailedAnalysis}
As discussed earlier, M2MBA proceeds in three phases. In the first (setup) phase, miners agree on the bribe $br$ and each active miner locks a collateral of value $v^{col}$. Let $\lambda_{col}$ denote the mining power of miners who locked their collateral via {\em LockCollateral}.

Given the above game setup, we prove that for all miners, censoring $tx^{dep}_A$ and accepting the bribe is a dominant strategy when $t \leq T$. Lemma~\ref{lm:2} shows that at round $T+1$, invoking He-Col to confiscate $v^{col}$ is also dominant. Thus, attacking dominates honest participation (Theorem~\ref{thm:M2MBA}). Lemma proofs are deferred to Appendix~\ref{sec:M2MBAproof}.


\begin{lemma}
\label{lm:1}
In $\zeta^{mba}(k, red)$, where $k \leq T$, if $(T-t_{pub})*br*\lambda_i/\lambda_{col}>f^{dep}_A$, then it is a dominant move for active miners to accept the bribe by censoring the transaction $tx^{dep}_A$ instead of including it in the block.
\end{lemma}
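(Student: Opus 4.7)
My plan is to prove this lemma by direct comparison of the two actions available to an active miner $M_i$ when it mines the round-$k$ block in subgame $\zeta^{mba}(k, \text{red})$: either \emph{include} $tx^{dep}_A$, or \emph{censor} it (i.e., mine only unrelated transactions). I will compute the expected utility of each and show that, under the stated hypothesis, censoring strictly dominates.

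First, I would write down the utility of the include action. Including $tx^{dep}_A$ transitions the game state to \textit{nred-A}, which terminates the bribery scheme: $v^{dep}$ is transferred to Alice and $v^{col}$ to Bob before $T$, so no miner ever invokes the \textit{col-M} path and no bribes from $C_{M2M}$ are ever released. Relative to a baseline block of unrelated transactions paying fee $f$, $M_i$'s incremental gain is therefore exactly $f^{dep}_A - f$, with no downstream contribution to $u_i$. Next, I would bound the expected utility of the censor action. Censoring keeps the state in \textit{red}, preserving the path to round $T+1$, at which the confiscating miner redeems $v^{col}$ via \textit{col-M} and triggers $C_{M2M}$ to distribute the bribe pool. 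Under the equal-distribution rule adopted in Section~\ref{sec:M2MBABrief}, the pool contains $(T-t_{pub})\cdot br$ tokens, allocated across the colluding miners in proportion to the share of censoring blocks they mine. By the memoryless block-creation process and the conditional-mining-share assumption from Section~\ref{sec:gamesetup}, the expected share to $M_i$ conditional on the coalition mining those blocks is $\lambda_i/\lambda_{col}$ of the pool, giving an expected bribe of $(T-t_{pub})\cdot br\cdot \lambda_i/\lambda_{col}$ on top of the baseline $f$.

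Taking the difference, the expected utility gap between censoring and including is $(T-t_{pub})\cdot br\cdot \lambda_i/\lambda_{col} - f^{dep}_A$, which is strictly positive by hypothesis. Since the only other available action in this subgame (mining purely unrelated transactions without locking into $C_{M2M}$) is weakly dominated by censoring — it yields the baseline $f$ but forfeits the bribe share — censoring is a strictly dominant move for $M_i$. To complete the argument I would observe that dominance is preserved across the subgame tree: the condition $k \leq T$ guarantees $(T-t_{pub}) \geq (T-k)$, so the expected remaining bribe from round $k$ onward is at most what the hypothesis bounds, and Lemma~\ref{lm:2} (proved separately) closes the induction by ensuring the confiscating miner actually redeems $v^{col}$ at round $T+1$.

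The main obstacle is the subtlety in justifying the expected share $\lambda_i/\lambda_{col}$ without double-counting or confusing unconditional and conditional probabilities — in particular, the attack must successfully reach round $T+1$ for any bribe to be paid at all. I would address this by explicitly conditioning on the subgame assumption that the state remains \textit{red} through round $T$, and by invoking backward induction from round $T+1$: given Lemma~\ref{lm:2}, the confiscating miner has a dominant strategy to redeem $v^{col}$, which guarantees the bribe is paid and makes the expected share computation valid.
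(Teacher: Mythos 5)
Your proposal is correct and follows essentially the same route as the paper's proof: a direct comparison of the miner's payoff $f^{dep}_A$ from including $tx^{dep}_A$ against the expected aggregate bribe $(T-t_{pub})\cdot br\cdot\lambda_i/\lambda_{col}$ from censoring, which the hypothesis makes strictly larger. The additional care you take in conditioning on the attack reaching round $T+1$ (via Lemma~\ref{lm:2}) and in netting out the baseline fee $f$ is a welcome tightening but does not change the argument.
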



\begin{lemma}
\label{lm:2}
In $\zeta^{mba}(k, red)$, where $k \leq T$, if $v^{col}*\lambda_i/\lambda{col} + (T-t_{pub})*br*(2*\lambda_i/\lambda_{col} - 1)>f^{dep}_A$, then it is a dominant move for active miners to offer a bribe to other miners to censor the transaction $tx^{dep}_A$ and perform the attack rather than including it in the block. 
\end{lemma}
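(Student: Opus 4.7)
The plan is to directly compare the expected utility of an active miner $M_i$ in the subgame $\zeta^{mba}(k,red)$, $k \leq T$, under two strategies: (S1) deploy $C_{M2M}$, offer the bribe, and censor $tx^{dep}_A$ on every block $M_i$ mines in $[t_{pub},T]$; and (S2) include $tx^{dep}_A$ honestly for the fee $f^{dep}_A$. By Lemma~\ref{lm:1}, once the collateral is locked every other active miner already prefers to censor, so their responses can be treated as fixed while analysing $M_i$'s deviation from the baseline.

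I would decompose the expected utility of (S1) into three contributions: (a) the expected share of the confiscated $v^{col}$ that $M_i$ receives, which, under the equal bribe-distribution rule adopted in the paper, equals $v^{col}\cdot\lambda_i/\lambda_{col}$; (b) the bribes $M_i$ collects for blocks $M_i$ censors in $[t_{pub},T]$ when a different colluding miner is the first to invoke \emph{col-M} at round $T+1$; and (c) the bribes $M_i$ pays out from the confiscated $v^{col}$ to the other colluders in the event that $M_i$ itself mines the confiscation block. Using that $M_i$ wins the confiscation race with probability $\lambda_i/\lambda_{col}$ and mines an expected fraction $\lambda_i/\lambda_{col}$ of the $(T-t_{pub})$ window blocks, the signed bribe flow from (b) minus (c) algebraically collapses to $(T-t_{pub})\cdot br\cdot(2\lambda_i/\lambda_{col}-1)$. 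Combined with (a) and set against the honest baseline $f^{dep}_A$, the hypothesis of the lemma is exactly the condition that (S1) strictly dominates (S2); the conclusion then follows from the rationality, risk-neutrality, and non-myopia assumptions of the system model.

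The delicate step is the bribe-flow bookkeeping that produces (b) and (c). Bribes here are an internal redistribution within the coalition: the confiscator funds them out of the collateral locked in $C_{M2M}$, and $M_i$ simultaneously plays the role of briber (when confiscating) and recipient (when not), so one must condition carefully on which coalition member wins the round-$T+1$ block and avoid double-counting $M_i$'s own censored blocks. Only after weighting the two conditional branches by $\lambda_i/\lambda_{col}$ and $1-\lambda_i/\lambda_{col}$ does the asymmetric coefficient $(2\lambda_i/\lambda_{col}-1)$ emerge, rather than the $\lambda_i/\lambda_{col}$ one might naively expect. A secondary subtlety is justifying that the relevant mining share is $\lambda_{col}$ rather than the whole network; this relies on the paper's convention that Alice reveals $pre_A$ only to a single miner, so non-colluding miners neither possess $pre_A$ to include in a window block nor can compete in the confiscation race, and can therefore be excluded from the expectation.
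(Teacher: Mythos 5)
Your overall strategy (compare the expected utility of ``offer the bribe and attack'' against the honest baseline $f^{dep}_A$, with the confiscation share contributing $v^{col}\lambda_i/\lambda_{col}$) matches the paper's proof. However, the step you yourself flag as delicate --- the bribe-flow bookkeeping --- does not work as you describe it. You define (b) as the bribes $M_i$ collects for its censored blocks \emph{conditional on another colluder winning the confiscation race}, and (c) as the bribes $M_i$ pays out \emph{conditional on $M_i$ winning it}. Writing $p=\lambda_i/\lambda_{col}$, your own weighting gives
\[
(b)-(c) \;=\; (1-p)\cdot (T-t_{pub})\,br\,p \;-\; p\cdot (T-t_{pub})\,br\,(1-p) \;=\; 0,
\]
not $(T-t_{pub})\,br\,(2p-1)$. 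Under your conditioning the bribe flows cancel exactly and you are left with the condition $v^{col}p>f^{dep}_A$, which is a different hypothesis from the one in the lemma (weaker for $p<1/2$, stronger for $p>1/2$). So the claimed ``algebraic collapse'' to the coefficient $(2p-1)$ is not a consequence of the decomposition you set up.

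The paper obtains the $(2p-1)$ coefficient by a cruder, unconditional accounting: it evaluates the strategy from the standpoint of the miner who commits to being the briber, counts the expected confiscation reward as $v^{col}p$, counts the bribe the briber ``offers to himself'' --- $(T-t_{pub})\,br\,p$, for the expected fraction of window blocks he censors --- as a pure gain, and counts the bribe paid to the other active miners --- $(T-t_{pub})\,br\,(1-p)$ --- as a loss, with no further weighting by the probability of actually winning the round-$(T+1)$ confiscation race. Summing gives exactly $v^{col}p+(T-t_{pub})\,br\,(2p-1)$, which is then compared to $f^{dep}_A$. To reproduce the lemma as stated you must either adopt that unconditional accounting or explain why the payment side should be charged with probability one while the confiscation reward is still taken in expectation; your more careful race-conditioned version, taken at face value, proves a different inequality. (Your secondary remark about restricting attention to $\lambda_{col}$ rather than the whole network is consistent with the paper's setup and is not the issue.)
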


\begin{lemma}
\label{lm:3}
        In $\zeta^{mba}(k, red)$, where $k \leq T$. If $v_{col}*\lambda_i > f^{dep}_A$, it is the dominant move for passive miners to censor $tx^{dep}_A$ rather than include it in the block.
\end{lemma}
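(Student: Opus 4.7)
The plan is to prove Lemma~\ref{lm:3} by directly comparing the expected utility of the two strategically distinct actions available to a passive miner $M_i$ who proposes a block in the subgame $\zeta^{mba}(k, red)$ with $t_{pub} \leq k \leq T$: namely, (a) include $tx^{dep}_A$ in the current block and collect its fee $f^{dep}_A$, or (b) censor $tx^{dep}_A$ by filling the block with unrelated transactions. Because passive miners do not deploy external bribery contracts by the system-model assumption, these are the only two choices relevant to $M_i$'s decision.

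Under action (a), $M_i$'s utility is exactly $f^{dep}_A$: {\em He-Dep} is redeemed via the {\em dep-A} path, $v^{dep}$ is transferred to Alice, $v^{col}$ is refunded to Bob, and no further HTLC-related payoff is available to any miner. Under action (b), $M_i$ first earns the unrelated-transaction fee $f$ in the current block while the subgame stays in state {\em red}. Since Bob is honest, at round $T+1$ he broadcasts $tx^{dep}_B$, which both forwards the He-Dep contents to He-Col and reveals $pre_B$ to every miner; because $pre_A$ is already visible in the mempool, every miner then possesses both preimages and races to include $tx^{col}_M$ via {\em col-M}. By the memoryless, proportional block-creation process of the system model, $M_i$ wins this race with probability exactly $\lambda_i$, contributing $v^{col} \cdot \lambda_i$ in expectation to action (b)'s payoff.

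Combining, action (b) yields expected utility $f + v^{col} \cdot \lambda_i$, strictly exceeding action (a)'s $f^{dep}_A$ by $(v^{col} \cdot \lambda_i - f^{dep}_A) + f > 0$ under the hypothesis $v^{col} \cdot \lambda_i > f^{dep}_A$ together with $f \geq 0$. Since the system model stipulates that miners are rational, risk-neutral, and non-myopic expected-utility maximizers, censoring is therefore the dominant move, as claimed.

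The main obstacle I anticipate is justifying that the subgame actually remains in {\em red} through round $T$, because the future component $v^{col} \cdot \lambda_i$ presumes no other passive miner includes $tx^{dep}_A$ in an intervening block. I plan to discharge this by a symmetric backward-induction argument over the rounds $k+1, \ldots, T$: the identical local calculation applies to every passive miner proposing a block in state {\em red}, so by the induction hypothesis no passive miner for whom the analogous inequality holds will deviate; any miner for whom $v^{col} \cdot \lambda_j \leq f^{dep}_A$ has mining power small enough that their individual probability of proposing an intervening block does not overturn the strict gap $v^{col} \cdot \lambda_i - f^{dep}_A$. This mirrors the implicit treatment of the cooperating cohort in Lemmas~\ref{lm:1} and~\ref{lm:2}, where the behavior of the aligned miners is fixed before certifying dominance for the representative member.
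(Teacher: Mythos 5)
Your proposal is correct and follows essentially the same route as the paper's own proof: both reduce the claim to comparing the fee $f^{dep}_A$ from including $tx^{dep}_A$ against the expected confiscation payoff $\lambda_i \cdot v^{col}$ from waiting for Bob to reveal $pre_B$ and racing to include $tx^{col}_M$. Your version is merely more explicit about the unrelated-transaction fee $f$ and about why the state remains \emph{red} through round $T$, points the paper's one-paragraph proof leaves implicit.
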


\begin{lemma}
\label{lm:4}
 In $\zeta^{mba}(k, red)$, where $k > T$. Let $z$ be the net earning of the miner by confiscating $v^{col}$. If $z > f^{dep}_A$, and Bob release transaction $tx^{dep}_B$ and hence release secret $pre_B$ at $T+1$, it is the dominant move for all miners to mine the block that includes $tx^{col}_M$, at $k=T+1$. (transient transition from $\zeta^{mba}(k, red)$ to $\zeta^{mba}(k, nred$-$rev$)  by including $tx^{dep}_B$ and then including $tx^{col}_M$ in the same block)
\end{lemma}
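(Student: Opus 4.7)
The plan is to enumerate the feasible actions for an arbitrary miner $M_i$ at round $k=T+1$ in state $red$ and show that packaging $tx^{dep}_B$ together with a freshly constructed $tx^{col}_M$ in $M_i$'s candidate block strictly dominates every alternative, which by symmetry yields the claim for all miners. The three meaningful plays are: (a) mine only unrelated transactions for fee $f$; (b) mine $tx^{dep}_B$ alone for fee $f^{dep}_B$, deferring the \emph{col-M} redemption to a successor block; (c) mine $tx^{dep}_B$ immediately followed by $M_i$'s own $tx^{col}_M$ in the same block, yielding the net earning $z$ (after $C_{M2M}$ disburses the promised bribes, in the case of an active miner). Note that $M_i$ is able to construct a valid $tx^{col}_M$ because it carries $pre_A$ from the censoring phase established in Lemmas~\ref{lm:1}--\ref{lm:3} and just observed $pre_B$ from Bob's broadcast of $tx^{dep}_B$.

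Next I would perform the pairwise comparisons. By the system-model assumption that protocol-related transactions out-fee unrelated ones, $f<f^{dep}_A<z$ (the second inequality is the hypothesis), so (c) strictly beats (a). For (c) versus (b), note that if $M_i$ plays (b), the subgame advances to $\zeta^{mba}(T+2,\text{nred-rev})$ in which every competing miner also holds $(pre_A,pre_B)$; by the argument parallel to Lemma~\ref{lm:2}, whichever miner wins round $T+2$ will itself execute the \emph{col-M} confiscation, so $M_i$'s terminal earning from (b) is capped at $f^{dep}_B$, strictly smaller than $z$ under the same fee hierarchy. Hence (c) is $M_i$'s strict best response regardless of what the other $n-1$ miners choose in round $T+1$, which is precisely the criterion of a dominant action in the round-$T+1$ subgame.

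Finally I would verify that the conclusion holds uniformly across active and passive miners. For active miners, $z$ already nets the equal-distribution bribe payout fixed in Section~\ref{sec:M2MBABrief}, so the hypothesis $z>f^{dep}_A$ encodes exactly the relevant inequality. For passive miners, $z$ reduces to essentially $v^{col}$ minus the displaced unrelated-transaction fee, and the hypothesis is then satisfied under the standing assumption $v^{col}>f^{dep}_A$ made in Section~\ref{sec:M2MBABrief}. The anticipated main obstacle is justifying the atomicity of play (c) within the round-based model of Section~\ref{sec:gamesetup}: $tx^{col}_M$ is only executable after $tx^{dep}_B$ has credited \emph{He-Col}, so the argument must rely on the standard device that a single block may order $tx^{dep}_B$ before $tx^{col}_M$ so that both are confirmed atomically and no competitor can interpose—a device already used in the MAD-HTLC and He-HTLC analyses, which carries over without modification.
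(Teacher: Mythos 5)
Your proposal is correct and reaches the same conclusion by essentially the same dominance argument, but the decomposition differs enough to be worth noting. The paper's proof concentrates entirely on the deferral strategy: it supposes $M_i$ postpones $tx^{col}_M$ to some round $t_y>T+1$ and shows that $M_i$ must then either win every block from $T+1$ to $t_y$ (expected gain $z\cdot x^{t_y-(T+1)}$, decaying exponentially in the deferral length, with $x=\lambda_i$ or $\lambda_i/\lambda_{col}$) or pay each intervening winner a bribe exceeding $z$ (expected cost $z\bigl(1-x^{t_y-(T+1)}\bigr)$, growing with the deferral length); both branches are dominated by immediate inclusion. Your three-way enumeration (unrelated-only, $tx^{dep}_B$ alone, $tx^{dep}_B$ plus $tx^{col}_M$) is cleaner as a best-response argument and makes explicit the comparison against mining only unrelated transactions, which the paper leaves implicit. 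What the paper's version buys is an explicit treatment of the counter-strategy in which a deferring miner bribes competitors not to confiscate — your argument covers this only implicitly via the "regardless of what the other $n-1$ miners choose" clause. One small imprecision on your side: saying $M_i$'s terminal earning under option (b) is "capped at $f^{dep}_B$" overstates the case, since $M_i$ could still win round $T+2$ and confiscate then; the correct statement is that the expected continuation value is $f^{dep}_B$ plus $x\cdot z<z$, which still yields strict dominance of (c) and matches the paper's exponential-decay bound at $t_y=T+2$. Your closing remark on the atomicity of ordering $tx^{dep}_B$ before $tx^{col}_M$ in one block is a point the paper's proof does not discuss but states in the lemma itself as the "transient transition," so it is a welcome clarification rather than a deviation.
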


\begin{lemma}

    We made the valid assumption that for an active miner, inequality $v^{col}\lambda_i/\lambda_{col} + (T - t_{pub})*br*(2\lambda_i/\lambda_{col} - 1) > f^{dep}_A$ holds. Similarly, for a passive miner, Assumption $v^{col}*\lambda_i > f^{dep}_A$ is also valid. 
\end{lemma}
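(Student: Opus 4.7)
The plan is to treat this statement as an empirical realism claim and justify both inequalities by combining two ingredients: (i) the order-of-magnitude gap between transaction fees and collateral values observed in production blockchains, and (ii) a lower bound on the hash rate share $\lambda_i$ of any miner capable of participating in such collusion. HTLCs are deployed precisely when the exchanged value $v^{dep}$ (and consequently the collateral $v^{col}$) is economically meaningful, whereas $f^{dep}_A$ is bounded by the marginal cost of block inclusion; this gives a baseline $v^{col}/f^{dep}_A$ typically in the hundreds to thousands. Similarly, any miner or pool with the infrastructure to coordinate a bribe has $\lambda_i$ well above the reciprocal of this gap.

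First I would dispose of the passive-miner inequality $v^{col}\cdot\lambda_i > f^{dep}_A$. It rewrites directly as $v^{col}/f^{dep}_A > 1/\lambda_i$, and substituting the empirical bounds of the previous paragraph closes it immediately, with substantial slack. This case is essentially a sanity check on HTLC parameter choices.

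For the active-miner inequality I would rearrange to the equivalent form
\[
\frac{\lambda_i}{\lambda_{col}}\bigl[v^{col} + 2(T - t_{pub})\cdot br\bigr] \;>\; f^{dep}_A + (T - t_{pub})\cdot br,
\]
which exposes the contest between the collateral share that the bribing miner retains and the bribes distributed to collaborators. Because the timeout window $(T - t_{pub})$ is bounded by the HTLC timelock, and $br$ is constrained to sit only slightly above $f^{dep}_A$ by the incentive-compatibility of the bribery game itself (Lemma~\ref{lm:1}), the right-hand side stays small; the left-hand side scales with $v^{col}$, which dominates.

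The main obstacle will be the sign of the coefficient $(2\lambda_i/\lambda_{col} - 1)$, which turns negative whenever the individual miner holds less than half of the colluding hash rate. In that regime the bribe-sharing term acts as a penalty rather than a bonus, and the collateral contribution $v^{col}\lambda_i/\lambda_{col}$ must absorb it entirely. The worst case is $\lambda_i/\lambda_{col} \to 0$, where the inequality essentially demands $v^{col} \gtrsim (\lambda_{col}/\lambda_i)\bigl[f^{dep}_A + (T - t_{pub})\cdot br\bigr]$. Here the proof would lean on the standing assumption that any miner willing to orchestrate the attack owns a non-negligible slice of $\lambda_{col}$, so that $\lambda_{col}/\lambda_i$ remains a modest constant; the collateral-to-fee gap established in the first paragraph then finishes the argument. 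A short corollary would note that if either empirical assumption is badly violated (tiny collateral or vanishingly small colluders), the bribery attack ceases to be profitable in the first place, so the lemma's scope coincides precisely with the regime where the attack is meaningful.
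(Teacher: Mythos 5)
Your treatment of the passive-miner inequality coincides with the paper's: the paper also disposes of $v^{col}\lambda_i > f^{dep}_A$ purely empirically, citing that Ethereum fees are below $0.01\%$ of transaction value and that even the smallest of the top-50 miners holds about $1.5\%$ of hash power. That half of your argument is fine.

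For the active-miner inequality, however, you are missing the idea the paper's proof actually rests on, and the substitute you offer does not close the gap. You treat $br$ as a free parameter "only slightly above $f^{dep}_A$" and then argue by orders of magnitude, conceding that when $2\lambda_i/\lambda_{col} - 1 < 0$ you must fall back on an assumption that the orchestrating miner holds a non-negligible share of $\lambda_{col}$. That assumption is not in the paper's system model, and it is in tension with the paper's own data (a colluder can hold $1.5\%$, making $\lambda_{col}/\lambda_i$ large, and He-HTLC takes $v^{col} < v^{dep}$, so your "collateral dominates" step can fail). The paper avoids this entirely because the $M2M$ contract (Algorithm~\ref{algo:M2MBA}) does not set $br \approx f^{dep}_A$; it calibrates the per-block bribe as
\[
br \;=\; f^{dep}_A\cdot\frac{\lambda_{col}}{\lambda_i}\cdot\frac{1}{T-t_{pub}} \;+\; \epsilon ,
\]
so that multiplying by the expected number of blocks a colluder mines, $(T-t_{pub})\lambda_i/\lambda_{col}$, yields an expected bribe income of exactly $f^{dep}_A + \frac{\lambda_i}{\lambda_{col}}(T-t_{pub})\epsilon > f^{dep}_A$ for every $\epsilon>0$, regardless of how small $\lambda_i/\lambda_{col}$ is. The paper then decomposes the bribing miner's net gain into this bribe-income term plus the residual of the confiscated collateral after all bribes are paid, and invokes the contract's explicit guard (the non-negativity check before disbursement) to conclude the residual is nonnegative. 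In short, the inequality holds by construction of the bribe schedule, not by an empirical collateral-to-fee gap; without using that calibration, your argument for the $\lambda_i/\lambda_{col} < 1/2$ regime is an unsupported assertion rather than a proof.
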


\begin{theorem}
\label{thm:M2MBA}
     It is dominant strategy for all the rational miners to perform the M2MBA than following the He-HTLC protocol, if $v^{col}*\lambda_i > f^{dep}_A$ and $v^{col}\lambda_i/\lambda_{col} + (T - t_{pub})*br*(2\lambda_i/\lambda_{col} - 1) > f^{dep}_A$ . 
\end{theorem}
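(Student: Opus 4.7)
The plan is to assemble Theorem~\ref{thm:M2MBA} directly from Lemmas~\ref{lm:1}--\ref{lm:4} by walking through the three phases of M2MBA in round order and showing that at every subgame on the equilibrium path, each miner (active or passive) strictly prefers the M2MBA action to any He-HTLC--compliant action. First I would partition the miners into active miners (those who locked collateral in $C_{M2M}$, with aggregate power $\lambda_{col}$) and passive miners, since the two conditions stated in the theorem correspond exactly to the dominance conditions established for these two classes.

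Next I would handle the censorship phase, i.e.\ the subgames $\zeta^{mba}(k,\textit{red})$ with $t_{pub}\le k\le T$. For an active miner $M_i$, I would argue by cases on whether $M_i$ is the one mining block $b_k$: if $M_i$ mines, Lemma~\ref{lm:2} gives that offering the bribe and censoring $tx^{dep}_A$ dominates inclusion whenever $v^{col}\lambda_i/\lambda_{col}+(T-t_{pub})\cdot br\cdot(2\lambda_i/\lambda_{col}-1)>f^{dep}_A$; if a different active miner $M_j$ mines, Lemma~\ref{lm:1} shows $M_i$'s expected share of the accumulated bribe still dominates the foregone $f^{dep}_A$. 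For passive miners the single-step argument of Lemma~\ref{lm:3} suffices under $v^{col}\lambda_i>f^{dep}_A$. Together these imply that, on the equilibrium path, no miner ever includes $tx^{dep}_A$ before $T$.

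For the confiscation phase, at round $T+1$ Bob, being honest, releases $tx^{dep}_B$ and thereby $pre_B$; since $pre_A$ is already known off-chain to all colluding miners, the subgame transitions into $\zeta^{mba}(T+1,\textit{nred-rev})$. Lemma~\ref{lm:4} then gives that every miner's dominant move is to include $tx^{col}_M$ atomically with $tx^{dep}_B$ and redeem \textit{He-Col}, so the bribing miner $M_{brb}$ confiscates $v^{col}$ and triggers the payout in $C_{M2M}$. Finally I would aggregate: in expectation a miner earns $\lambda_i/\lambda_{col}\cdot v^{col}$ from confiscation plus a proportional share of the bribes, which by the two hypotheses strictly exceeds $f^{dep}_A$ (active case) or $v^{col}\lambda_i$ vs.\ $f^{dep}_A$ (passive case)---the best the miner could have earned under any honest deviation such as including $tx^{dep}_A$, $tx^{dep}_B$, or only unrelated transactions with fee $f$.

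The main obstacle I anticipate is ensuring that the local, per-subgame dominance given by Lemmas~\ref{lm:1}--\ref{lm:4} lifts to a global dominant strategy over the entire game $\zeta^{mba}$, because a miner's choice at round $k$ affects the state reached at round $k+1$. I would address this by a backward-induction argument: start from the terminal round $T+1$ where Lemma~\ref{lm:4} fixes behavior unconditionally; then propagate backward, observing that none of the actions available to Alice or Bob in the on-equilibrium subgames alter the availability or profitability of the bribe (Alice is protocol-compliant by assumption and Bob's best response in $\zeta^{mba}(.,\textit{red})$ is still to release $tx^{dep}_B$ at $T+1$). A secondary subtlety is the tie-breaking when multiple active miners compete to include $tx^{col}_M$ in the same block; here I would invoke the random tie-breaking stipulated in the system model so that each $M_i$'s expected reward is exactly $\lambda_i/\lambda_{col}\cdot v^{col}$ minus their expected bribe obligations, matching the quantities used in Lemma~\ref{lm:2}.
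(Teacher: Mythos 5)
Your proposal is correct and follows essentially the same route as the paper: partition miners into active and passive, invoke Lemmas~\ref{lm:1} and~\ref{lm:2} (plus Lemma~\ref{lm:4}) for the active case and Lemmas~\ref{lm:3} and~\ref{lm:4} for the passive case, and compare against the honest payoff $f^{dep}_A$. Your added care about lifting per-subgame dominance to a global dominant strategy via backward induction and about tie-breaking at round $T+1$ is a reasonable elaboration the paper leaves implicit, but it does not change the underlying argument.
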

\begin{proof}
    
    We will divide the proof into two cases: the behavior of passive miners and that of active miners.

{\em Passive Miners:} The potential strategies for passive miners are as follows:
a) Include $tx^{dep}_A$ in the block.
b) Wait until round $T+1$, when Bob releases $tx^{dep}_B$, and include both $tx^{dep}_B$ and $tx^{dep}_M$ in the same block in round $T+1$ (this constitutes an attack).
c) Act honestly and include only $tx^{dep}_B$.
Given that $f^{dep}_A = f^{dep}_B$, under choices a) and c), the miner would earn $f^{dep}_A$. However, according to Lemma~\ref{lm:3} and Lemma~\ref{lm:4}, choice b) dominates both a) and c), making the attack the most profitable strategy for passive miners.

{\em Active Miners:} The possible moves for active miners are:
a) Include $tx^{dep}_A$ in the block.
b) Accept a bribe from other active miners to censor $tx^{dep}_A$.
c) Offer a bribe to other active miners to censor $tx^{dep}_A$ until round $T+1$, when Bob releases $tx^{dep}_B$, and then include both $tx^{dep}_B$ and $tx^{dep}_M$ in the same block.
d) Act honestly and include only $tx^{dep}_B$.
Again, assuming $f^{dep}_A = f^{dep}_B$, following options a) and d), the miner would earn $f^{dep}_A$. However, from Lemma~\ref{lm:1}, option b) dominates both a) and d). Furthermore, from Lemma~\ref{lm:2} and Lemma~\ref{lm:4}, option c) also dominates a) and d). It is important to note that by accepting the bribe, the miners participate in the attack.

In conclusion, based on the analysis above, performing the attack is the dominant strategy for both passive and active miners, rather than following He-HTLC."

    

    
\end{proof}

\section{\prot\ Analysis} 
\label{sec:DEMBAAnalysis}
In this section, we present a detailed analysis of \prot. The game setup remains consistent with that outlined in the previous section (Section~\ref{sec:gamesetup}). 
Let $S$ represent the set of state of the game, where $S=\{all\text{-}red, nred\text{-}AB, nred\text{-}A'B,$\\ $nred\text{-}AA'B, nred\text{-}ABT, nred\text{-}A'BT, nred\text{-}AA'BT\}$. The description of each state is as follows: 1) $all\text{-}red$ - indicates that all smart contracts are redeemable, 2) $nred\text{-}AB$ - $C^{col}_A$ is redeemed with $pre_A$ and $C^{col}_B$ is redeemed before $T$, 3) $nred\text{-}A'B$ - $C^{col}_A$ is redeemed with $pre_A'$ and $C^{col}_B$ is redeemed before $T$, 4) $nred\text{-}AA'B$ - $C^{col}_A$ is redeemed with both $pre_A$ and $pre_A'$, and $C^{col}_B$ is redeemed before $T$. Similarly, $nred\text{-}ABT$, $nred\text{-}A'BT$, $nred\text{-}AA'BT$ follow the same description except $C^{col}_B$ is redeemed after $T$. 
 
Given this, we demonstrate that under \prot, no party-including miners-can profit by deviating from the protocol; in the state $nred\text{-}AB$, all involved parties earn more by adhering to it. Due to space constraints, all lemma proofs are provided in the appendix~\ref{sec:DEMBAproof}.


\begin{lemma}
     \label{lemma:AliceDominant}
     For $v^{ded}>0$, it is dominant move for the Alice to follow \prot\ than deviating from it.
\end{lemma}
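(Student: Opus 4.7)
The plan is to prove Lemma~\ref{lemma:AliceDominant} by enumerating Alice's strategy space and showing that every deviation from her protocol-prescribed action yields strictly lower token utility, with the hypothesis $v^{ded}>0$ providing the final separation. I will organize the argument by the reachable states in $S$ and by Alice's three honest options: first, broadcast $tx^{col}_{pre_A}$ by round $T$, yielding utility $v^{dep}+v^{col}_A-f^{col}_{pre_A}$ once $C^{dep}$ reads both $pre_A$ from $C^{col}_A$ and $pre_B$ from $C^{col}_B$; second, broadcast $tx^{col}_{pre_A'}$ after $T$, yielding $v^{col}_A-f^{col}_{pre_A'}$ and refunding $v^{dep}$ to Bob; third, after on-chain censorship of an earlier $pre_A$ reveal, broadcast $tx^{col}_{pre_{AA}'}$, yielding $v^{col}_A-v^{ded}-f^{col}_{pre_{AA}'}$ (burning $v^{dep}$ as a punishment to Bob).

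Next I will bound the utility of each possible deviation against the matching honest option. Broadcasting $tx^{col}_{pre_A'}$ when $pre_A$ was available forfeits $v^{dep}$ outright and pays a strictly higher fee by Eq.~\eqref{eq:fee}, so it is strictly dominated by the first honest option. Broadcasting $tx^{col}_{pre_{AA}'}$ when not censored yields $v^{col}_A-v^{ded}-f^{col}_{pre_{AA}'}$, which is worse than the first option by exactly $v^{dep}+v^{ded}+(f^{col}_{pre_{AA}'}-f^{col}_{pre_A})>0$; this is the step at which the hypothesis $v^{ded}>0$ is essential, since it rules out the degenerate case where Alice could costlessly burn $v^{dep}$ to spite Bob or to exploit some side-channel incentive. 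Refusing to broadcast anything forgoes the entire $v^{col}_A$. A symmetric set of comparisons, swapping the roles of the first and second honest options, handles the case in which Alice does not desire the exchange.

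The main obstacle I anticipate is the censored-but-committed branch: the honest third option earns strictly less, in tokens, than the ``back out'' deviation of broadcasting only $tx^{col}_{pre_A'}$, which would earn $v^{col}_A-f^{col}_{pre_A'}$. I will reconcile this either by invoking the complementary lemma establishing that Bob's dominant strategy precludes bribery on the equilibrium path (so this subgame is not reached under rational play), or, if in-subgame dominance is demanded, by tightening the fee schedule so that $f^{col}_{pre_A'}-f^{col}_{pre_{AA}'}\geq v^{ded}$, which combined with Eq.~\eqref{eq:fee} and Eq.~\eqref{eq:minerFee} pins down the third option as Alice's best response conditional on censorship. Either route closes the argument while making the role of the hypothesis $v^{ded}>0$ fully explicit.
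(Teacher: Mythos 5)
Your first two paragraphs reproduce the paper's own proof of Lemma~\ref{lemma:AliceDominant}, only with the fee terms made explicit: the paper likewise enumerates Alice's three redemption options, assigns them payoffs $v^{dep}+v^{col}_A$, $v^{col}_A$ at a higher fee, and $v^{col}_A-v^{ded}$, and lets $v^{ded}>0$ separate the burn option from honest play before concluding that $nred\text{-}AB$/$nred\text{-}ABT$ are Alice's best states. Where you genuinely go beyond the paper is the censored-but-committed branch: you are right that, conditional on censorship, broadcasting $tx^{col}_{pre_{AA}'}$ yields $v^{col}_A-v^{ded}-f^{col}_{pre_{AA}'}$, strictly less than the $v^{col}_A-f^{col}_{pre_{A}'}$ Alice gets by quietly backing out with $tx^{col}_{pre_{A}'}$, so the punishment is not a best response within that subgame and the ``dominant move'' claim needs an argument the paper's proof does not supply (it lists the three payoffs and stops). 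Of your two proposed repairs, only the first is viable with the protocol as specified --- treating the burn as an off-equilibrium threat whose subgame is never reached because Bob (Lemma~\ref{lemma:BobDominant}) has no profitable bribe, though you should note this weakens ``dominant strategy'' to an equilibrium-path statement and leans on the credibility of exactly the threat in question. The second repair is closed off: it requires $f^{col}_{pre_{A}'}\geq f^{col}_{pre_{AA}'}+v^{ded}$, which contradicts the ordering $f^{col}_{pre_{A}'}<f^{col}_{pre_{AA}'}$ that \prot\ itself imposes in Eq.~\eqref{eq:fee}, so it cannot be invoked without redesigning the fee schedule.
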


\begin{lemma}
     \label{lemma:BobDominant}
     For $v^{ded}>0$, it is dominant move for the Bob to follow \prot\ than deviating from it.
\end{lemma}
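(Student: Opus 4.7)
The plan is to enumerate Bob's possible strategies and show that any deviation from \prot\ yields strictly smaller expected utility than honest play, given $v^{ded} > 0$. I would begin by pinning down the honest baseline: Bob broadcasts $tx^{col}_{pre_B}$ before time $T$, paying $f^{col}_{pre_B} = fm^{col}_{pre_B}$ with no burned surplus, so his realized utility is $v^{col}_B - f^{col}_{pre_B}$ plus either $v^{dep}$ (if Alice never reveals and Bob redeems $C^{dep}$ via dep-B) or $0$ extra contribution to $v^{dep}$ (if Alice redeems via dep-A). This serves as the quantity every deviation must beat.

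Next, I would classify Bob's deviations into four exhaustive cases and rule them out in order. \emph{Case (i): delaying $tx^{col}_{pre_B}$ until after $T$.} Inequalities (\ref{eq:fee})-(\ref{eq:minerFee}) make Bob pay strictly more while miners earn strictly less, with the gap $v^{ded} > 0$ burned, so Bob loses exactly $v^{ded}$ versus the baseline. \emph{Case (ii): never revealing $pre_B$.} Then $C^{col}_B$ stays locked and Bob forfeits all of $v^{col}_B$; moreover, since every non-burn path of $C^{dep}$ requires $C^{col}_B$ to output $pre_B$, no $v^{dep}$ can flow back either, giving utility strictly below the baseline. \emph{Case (iii): bribing miners to censor $tx^{col}_{pre_A}$.} By the construction of $C^{col}_A$ in Section~\ref{sec:overview}, Alice responds after $T$ with $tx^{col}_{pre_{AA}'}$, which forces $C^{dep}$ into dep-Burn and destroys $v^{dep}$; Bob ends up with at most $v^{col}_B - f^{col}_{pre_B} - k \cdot br$ for $k \geq 1$ bribed blocks, strictly worse than honest. \emph{Case (iv): arbitrary side-contract collusion with miners.} Here I would argue that because no redemption path of $C^{dep}$, $C^{col}_A$, or $C^{col}_B$ ever pays a miner more than a standard inclusion fee, the total pool available to Bob-plus-miner never exceeds the honest one, so any bribe Bob pays is a pure deadweight loss.

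The main obstacle will be case (iv): unlike the first three, which reduce to a single inequality, formally closing off all miner-side collusions requires showing that the joint attack surface of $(C^{dep}, C^{col}_A, C^{col}_B)$ admits no ordering of reveals and inclusions---possibly involving partial-block bribery in the style of B3A---that strictly improves both parties' payoffs simultaneously. I would handle this by mirroring the subgame decomposition used in Lemma~\ref{lm:2} and Theorem~\ref{thm:M2MBA}: for each state $s \in S$, show that the colluding miner's best response is to include available transactions at their posted fees, so Bob has no coalition partner whose cooperation he can purchase profitably, and the dominance claim follows.
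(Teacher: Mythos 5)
Your proposal is correct, but it takes a substantially broader route than the paper. The paper's own proof of this lemma is essentially your Case (i) alone: it observes that Bob's only action on $C^{col}_B$ is to release $pre_B$, that the only deviation is to delay that release past round $T$, and that doing so costs him $v^{ded}>0$, so the states $nred\text{-}AB$, $nred\text{-}A'B$, and $nred\text{-}AA'B$ dominate for Bob. Your Cases (ii)--(iv) are not part of the paper's lemma proof at all: the censorship-bribery scenario (your Case (iii)) is argued informally in the overview section (where the paper notes that Alice's $tx^{col}_{pre_{AA}'}$ response forces dep-Burn and leaves Bob with $v^{col}_B - k\cdot br$), and general Bob--miner collusion (your Case (iv)) is deferred to the proof of the main theorem, where it is dispatched by noting that Alice retains the deciding move on whether $v^{dep}$ returns to Bob, so the Bob-plus-miner coalition can never extract more than $v^{col}_B$ plus standard fees. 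Your decomposition buys a self-contained dominance statement for Bob that does not lean on the theorem, and it makes explicit the non-revelation case (ii), which the paper glosses over; the cost is that your Case (iv) remains a plan rather than a finished argument, whereas the paper avoids that burden inside the lemma by scoping it narrowly to the timing of $pre_B$'s release. Your key observation for closing Case (iv) --- that no redemption path of any \prot\ contract pays a miner more than an inclusion fee --- is exactly the design insight the paper uses in the theorem, so the two arguments converge there.
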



\begin{lemma}

 \label{lemma:MinerDominant}
     For $\alpha<1$, it is dominant move for the miners to follow \prot\ than deviating from it.
\end{lemma}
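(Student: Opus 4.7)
The plan is to enumerate every deviation available to a rational miner (or to a coalition of miners of combined power $\alpha<1$) inside a subgame of \prot, and to argue that none of them yields a payoff higher than honestly including the next standard \prot\ transaction. The central structural observation I would lean on is that, unlike MAD-HTLC and He-HTLC, no redemption path of $C^{dep}$, $C^{col}_A$, or $C^{col}_B$ pays tokens to a miner: the only tokens a miner can ever receive are the $fm$-fees from including \prot-related transactions or the fee $f$ from unrelated transactions. Hence every profitable deviation must reduce to either (i) an on-chain bribe offered by Alice or Bob, or (ii) a reordering/censoring of transactions that raises the fees collected by some mining coalition.

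For (i), I would invoke Lemmas~\ref{lemma:AliceDominant} and~\ref{lemma:BobDominant}: any bribe either party could offer to a miner strictly worsens the briber's utility relative to honest play, so in any rational equilibrium no such bribe is tendered, and this case vanishes. For (ii), I would do a case split over the states in $S$. In $all\text{-}red$ before time $T$, the honest actions release $tx^{col}_{pre_{A}}$ and $tx^{col}_{pre_{B}}$, which by inequality~(\ref{eq:minerFee}) pay the strictly largest miner fees $fm^{col}_{pre_{A}}$ and $fm^{col}_{pre_{B}}$. Censoring $tx^{col}_{pre_{A}}$ only shifts the subgame toward $nred\text{-}AA'B$, where the eventually-included $tx^{col}_{pre_{AA}'}$ pays the strictly smaller $fm^{col}_{pre_{AA}'}$; censoring $tx^{col}_{pre_{B}}$ merely delays it past $T$, where the post-$T$ miner fee is again strictly smaller. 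In either case the total on-chain miner revenue pool shrinks, so no redistribution among colluding miners can restore it to the honest level.

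The step I expect to be hardest is ruling out a miner-to-miner scheme analogous to M2MBA, where an active coalition locks side collateral to reward censorship. The key is that here the coalition has nothing to split: in \prot\ no miner can ever confiscate $v^{dep}$, $v^{col}_A$, or $v^{col}_B$, so the ``bribing pool'' that funds M2MBA in He-HTLC simply does not exist. I would formalize this by bounding the total tokens payable to the set of miners in any realized play by the honest-fee vector and then applying~(\ref{eq:minerFee}) to show the honest vector component-wise dominates every deviating vector. Finally, the hypothesis $\alpha<1$ is used exactly where M2MBA needed $\alpha=1$ implicitly: it guarantees that an honest miner will, with positive probability per round, include Alice's punishment transaction $tx^{col}_{pre_{AA}'}$, so the censoring coalition cannot sustain the censorship and escape the burn of $v^{dep}$. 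Combining the bribery impossibility, the fee-domination argument, and the $\alpha<1$ liveness observation yields that honest participation is a dominant strategy for every miner.
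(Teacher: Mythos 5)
Your proof is correct in substance and considerably more comprehensive than the paper's, but the two differ in one notable respect: the meaning of $\alpha$. The paper's proof considers essentially one deviation---censoring Alice's and Bob's collateral transactions---and disposes of it in two steps: indefinite exclusion is unprofitable because \prot\ transactions out-fee unrelated ones, and delayed inclusion is unprofitable because the miner's earned fee decays geometrically after round $T$ (the paper writes the post-$T$ earning as $\alpha^{t}\cdot f^{col}_{pre_A}$, so $\alpha$ is a per-round fee-decay factor, not the coalition mining-power bound you assume). Your argument instead enumerates the full deviation space: you make explicit the structural fact that no redemption path of $C^{dep}$, $C^{col}_A$, or $C^{col}_B$ pays tokens to miners (so there is no confiscatable pool to fund an M2MBA-style side contract), you reduce party-funded bribes to Lemmas~\ref{lemma:AliceDominant} and~\ref{lemma:BobDominant}, and you use inequality~(\ref{eq:minerFee}) to show the honest fee vector dominates every deviating one. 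This buys a genuinely stronger conclusion---in particular, yours is the only place the miner-to-miner coalition case is actually closed, which the paper's proof silently omits even though that attack is the paper's own motivation---at the cost of misreading the hypothesis: your liveness use of $\alpha<1$ as ``some honest miner eventually includes the punishment transaction'' is not how the paper deploys the condition, though your appeal to the decreasing fee schedule in~(\ref{eq:minerFee}) recovers the same penalty for late inclusion that the paper's $\alpha^{t}$ decay formalizes, so no actual gap results.
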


\begin{theorem}
    It is dominant strategy for all the parties (Alice, Bob, and Miners) to follows the protocol than deviating from it.
\end{theorem}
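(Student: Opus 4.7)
The plan is to assemble the theorem directly from the three preceding lemmas (Lemma~\ref{lemma:AliceDominant}, Lemma~\ref{lemma:BobDominant}, and Lemma~\ref{lemma:MinerDominant}), which already isolate the payer, payee, and miners respectively. Since a dominant strategy for each participant holds against every possible combination of actions by the others, the joint protocol-following profile is automatically a strategy profile in which no single party can improve its expected utility by unilaterally deviating. I would therefore structure the proof as a short composition argument rather than re-deriving utilities from scratch.

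First I would fix notation and recall the global parameter conditions: the deduction $v^{ded}>0$ required by Lemmas~\ref{lemma:AliceDominant} and \ref{lemma:BobDominant}, and the mining-share bound $\alpha<1$ required by Lemma~\ref{lemma:MinerDominant}. Then I would invoke each lemma in turn. For Alice, Lemma~\ref{lemma:AliceDominant} gives that redeeming $C^{col}_A$ with $pre_A$ before $T$ (or with $pre_A'$ after $T$ if she never attempted to commit, or with both $pre_A$ and $pre_A'$ if her commitment transaction was censored) dominates any alternative; for Bob, Lemma~\ref{lemma:BobDominant} gives that releasing $pre_B$ via $tx^{col}_{pre_B}$ before $T$ dominates delaying or withholding; and for every miner, Lemma~\ref{lemma:MinerDominant} gives that including \prot-related transactions honestly dominates any censorship, bribery-accepting, or bribery-offering strategy under the fee structure \eqref{eq:fee}-\eqref{eq:minerFee}. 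Combining these yields that the honest profile $(a_A^{\text{hon}}, a_B^{\text{hon}}, a_{M_1}^{\text{hon}}, \ldots, a_{M_n}^{\text{hon}})$ is a profile of mutually dominant strategies, which by definition implies it is a dominant strategy equilibrium of the game.

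The main obstacle I expect is not the assembly itself but verifying that the three lemmas are genuinely dominance statements rather than best-response statements. A dominant strategy must beat every alternative \emph{against every strategy of every other player}; so I would carefully check, by re-reading the proofs in Appendix~\ref{sec:DEMBAproof}, that each lemma quantifies over the full action space of the remaining parties (including, crucially, miner coalitions of size up to $\alpha<1$ engaging in bribery either from Alice, from Bob, or among themselves). If any lemma is actually proved only against honest behavior by the others, I would need to strengthen it by layering in the bribery-channel arguments analogous to those used in Section~\ref{sec:M2MBAttackAnalysis}, showing that no side channel through $C_{Bob}$-type contracts or a \prot-analogue of $C_{M2M}$ can make a miner's deviation profitable, precisely because the two-phase commit in \prot\ leaves miners with no token-confiscation path to split.

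Finally, to close the argument cleanly, I would remark that the result is tight in the sense that both parameter conditions are necessary: if $v^{ded}=0$ the fee gap in \eqref{eq:fee}-\eqref{eq:minerFee} collapses and Alice/Bob deviations become costless; and if $\alpha=1$ a monopolist miner can censor arbitrarily and the analogue of Lemma~\ref{lm:1} would revive a bribery channel. This parameter discussion reinforces that the theorem's hypotheses are exactly those inherited from the three lemmas, so no additional assumption is smuggled in when combining them.
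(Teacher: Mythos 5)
Your first step --- composing Lemmas~\ref{lemma:AliceDominant}, \ref{lemma:BobDominant}, and \ref{lemma:MinerDominant} under the hypotheses $v^{ded}>0$ and $\alpha<1$ to conclude that the honest profile is a profile of mutually dominant strategies --- is exactly how the paper opens its own proof (phrased there as taking the intersection of the states in which each party's earnings are maximized and observing that $nred\text{-}AB$ is the common one). However, the paper's proof has a second component that you only gesture at conditionally: it explicitly rules out profitable \emph{pairwise collusion}, treating the three coalitions Alice--Bob, Alice--Miner, and Bob--Miner in turn. Alice--Bob is dismissed because they are counter-parties in the exchange; Alice--Miner is dismissed by a token-conservation argument (no tokens are created in the exchange beyond consensus-granted rewards, and honest Alice already receives $v^{dep}+v_A^{col}$, the maximum extractable, while the miner already receives its full fee and mining reward); Bob--Miner is dismissed because Alice retains the deciding move on whether $v^{dep}$ is returned to Bob, so Bob is capped at $v_B^{col}$ regardless of miner assistance. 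You correctly identify this as the point to check --- whether the lemmas quantify over bribing coalitions --- but you defer it as an ``if needed'' strengthening rather than carrying it out. Given that every attack motivating the paper (B3A, M2MBA, SDRBA, HyDRA) is a collusion attack, this coalition analysis is not optional verification but the substantive content of the theorem; a proof that stops at unilateral dominance would not rule out the very attack class \prot\ is designed to defeat. Your closing observation on the necessity of $v^{ded}>0$ and $\alpha<1$ is a sensible addition not present in the paper, but it does not substitute for the missing coalition step.
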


\begin{proof}
From Lemmas~\ref{lemma:AliceDominant}, \ref{lemma:BobDominant}, and \ref{lemma:MinerDominant}, the intersection of states where each party maximizes its earnings reveals that $nred\text{-}AB$ is the common state offering high rewards for all. Thus, adhering to \prot\ emerges as the dominant strategy for Alice, Bob, and the miners, since any deviation yields lower payoffs.

Next, we demonstrate that no two parties can collaborate to earn more than they would by following \prot. There are three possible collaborations: (1) Alice-Bob, (2) Alice-Miner, and (3) Bob-Miner. Collaboration between Alice and Bob can be ruled out since they are counter-parties in the transaction.

If Alice behaves honestly, her expected earnings are  $v^{dep}+v_A^{col}$. Since no additional tokens are created in the exchange process (except for mining rewards granted by the consensus protocol), Alice cannot earn more than $v^{dep}+v_A^{col}$ even by collaborating with the Miner. Similarly, the Miner cannot earn more than the transaction fee and mining reward, which they are already entitled to by behaving honestly.

Finally, consider collaboration between Bob and the Miner. In this case, Alice retains the deciding move on whether $v^{dep}$ is returned to Bob. Thus, Bob cannot earn more than $v_B^{col}$, the amount he would receive by behaving honestly. Consequently, collaboration offers no advantage to Bob or the Miner.
 
\end{proof}

\section{\prot\ Evaluation}
\label{sec:DEMBAEvaluation}
In this section, we present the evaluation of \prot\ in comparison to state-of-the-art approaches. We begin by outlining the implementation details, followed by a description of the evaluation setup. Finally, we conclude with a discussion of the evaluation results.

\subsection{Implementation Details}
We demonstrate the effectiveness of \prot\ through its evaluation on both Bitcoin and Ethereum. While deploying \prot\ on smart contract-enabled blockchains like Ethereum and Hyperledger Fabric is relatively straightforward, this section focuses primarily on detailing our implementation of \prot\ for Bitcoin.

The Bitcoin scripting language lacks a built-in mechanism for storing and retrieving arbitrary data, such as interacting with variables or state, as is possible with Ethereum smart contracts. This limitation affects the communication between $C^{dep}$, $C^{col}_A$, and $C^{col}_B$. In Bitcoin, each smart contract is represented as a UTXO, with respective names denoted as $UTXO^{C^{dep}}$, $UTXO^{C^{col}_A}$, and $UTXO^{C^{col}_B}$.

\begin{figure*}
\centering
\includegraphics[height=5.15cm, width=0.7\linewidth]{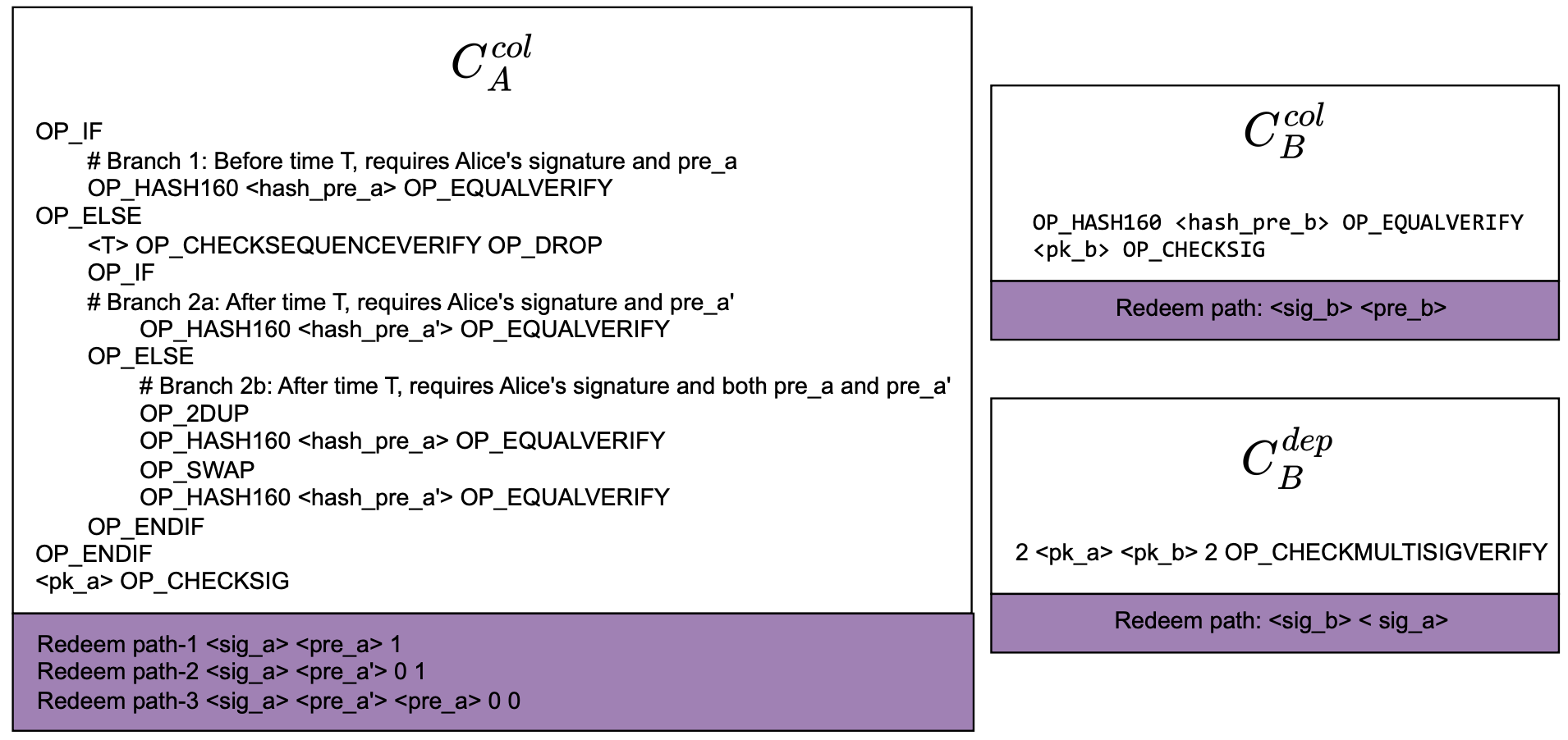}
\caption{Bitcoin implementation (locking script) for \prot\ contracts}
\label{fig:contractImplementation}
\end{figure*}

Alice creates $UTXO^{C^{col}_A}$, while Bob creates $UTXO^{C^{dep}}$ and \\ $UTXO^{C^{col}_B}$. The Bitcoin locking scripts for these UTXOs are illustrated in Figure~\ref{fig:contractImplementation}. As described earlier, $UTXO^{C^{col}_A}$ can be redeemed through three distinct paths: 
\begin{itemize}
    \item Using $pre_A$ via the transaction $tx^{col}_{pre_A}$, 
    \item Using $pre_A'$ via the transaction $tx^{col}_{pre_A'}$, and 
    \item Using both $pre_A$ and $pre_A'$ via the transaction $tx^{col}_{pre_{AA}'}$.
\end{itemize}
Each of these transactions must be signed by Alice.

In contrast, $UTXO^{C^{col}_B}$ has only one redeem path, which involves revealing $pre_B$ through the transaction $tx^{col}_{pre_B}$, signed by Bob.

The final set of transactions transfers the funds to the respective parties by redeeming $UTXO^{C^{dep}}$. These transactions depend on how $UTXO^{C^{col}_A}$ is redeemed, leading to three possible paths:
\begin{itemize}
    \item $tx^{dep}_{pre_A}$, which uses the outputs of $tx^{col}_{pre_A}$ and $tx^{col}_{pre_B}$,
    \item $tx^{dep}_{pre_A'}$, which uses the outputs of $tx^{col}_{pre_A'}$ and $tx^{col}_{pre_{B}}$, and
    \item $tx^{dep}_{pre_{AA}'}$, which uses the outputs of $tx^{col}_{pre_{AA}'}$ and $tx^{col}_{pre_{B}}$.
\end{itemize}
All these transactions require signatures from both Alice and Bob. This structured transaction mapping ensures the correct ordering of events, as each $tx^{dep}$ transaction takes inputs from the relevant $tx^{col}$ transactions. A detailed graph illustrating this process is provided in Figure~\ref{fig:transactionGraph}.

\begin{figure}
\centering
\includegraphics[height=5.05cm, width=0.69\linewidth]{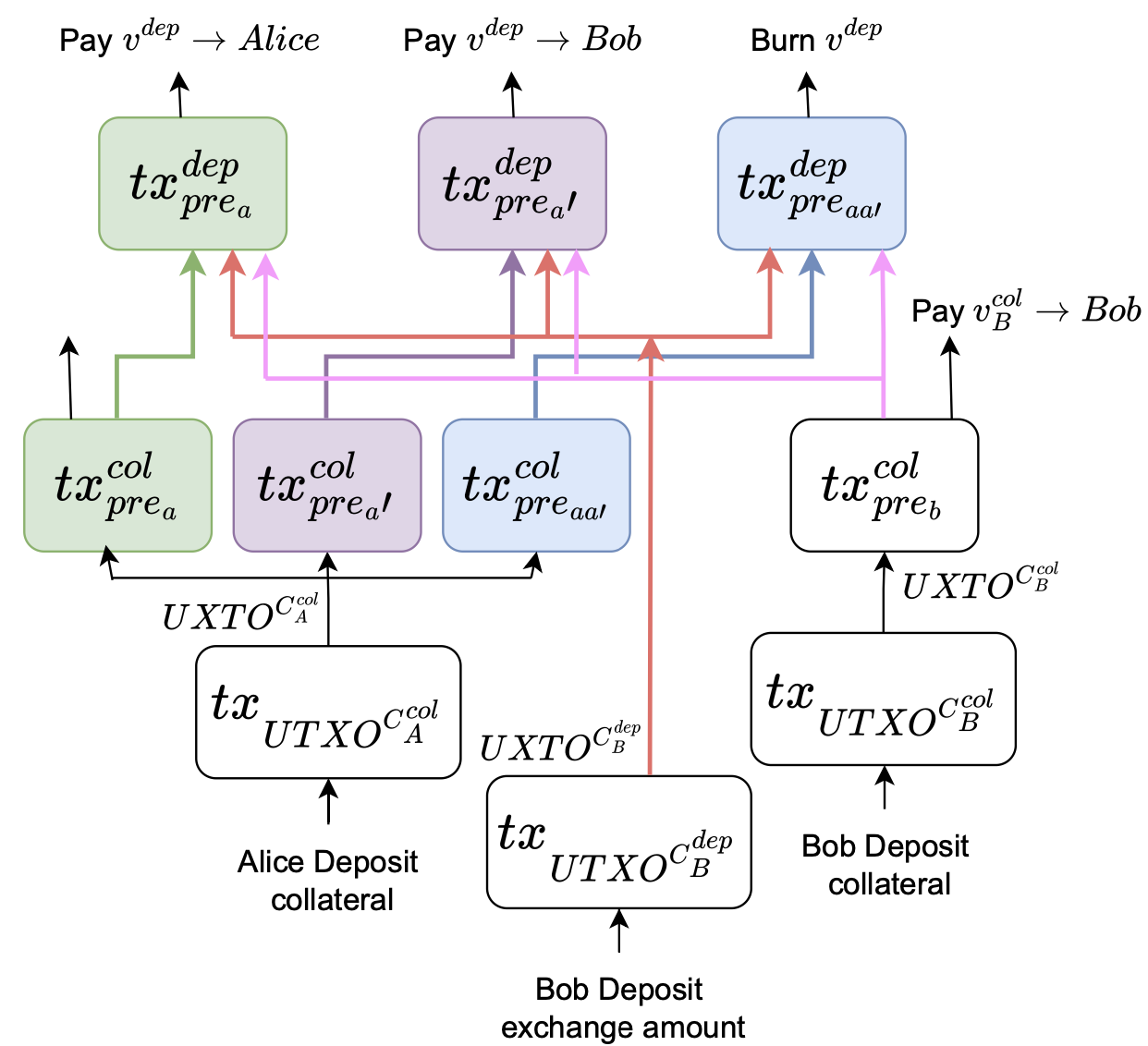}
\caption{Transaction dependency with respect to UTXO in Bitcoin implementation of \prot}
\label{fig:transactionGraph}
\end{figure}

This design addresses Bitcoin's inability to natively pass data between transactions, unlike Ethereum. Although Bitcoin provides the OP\_RETURN opcode for embedding arbitrary data in transactions, Bitcoin Script cannot directly read such outputs. Instead, external mechanisms, such as Transaction Indexing Services or Custom Off-Chain Logic, can be employed to achieve this functionality.

The following steps outline the protocol that each party follows:

\begin{itemize}
    \item Alice creates $UTXO^{C^{col}_A}$ along with the transactions $tx^{col}_{pre_A}$, $tx^{col}_{pre_A'}$, and $tx^{col}_{pre_{AA}'}$, and shares these transactions with Bob. Note that Alice does not share the signed versions of these transactions with Bob.

    \item Bob creates $UTXO^{C^{col}_B}$ and $tx^{col}_{pre_B}$, then shares the unsigned version of $tx^{col}_{pre_B}$ with Alice. 
    The first two steps are essential to verify that the construction of all $tx^{dep}$, as described in the following step, is correct.

    \item Bob further creates $UTXO^{C^{dep}}$ and the transactions $tx^{dep}_{pre_A}$, $tx^{dep}_{pre_A'}$, and $tx^{dep}_{pre_{AA}'}$. He signs these transactions and shares them with Alice. 
    However, Bob keeps $UTXO^{C^{dep}}$ secret at this stage, i.e not publish it onto the blockchain.

    \item Alice signs $tx^{dep}_{pre_A}$, $tx^{dep}_{pre_A'}$, and $tx^{dep}_{pre_{AA}'}$, and sends the signed versions back to Bob, ensuring that both parties possess duly signed transactions.

    \item After receiving the signed versions of $tx^{dep}_{pre_A}$, $tx^{dep}_{pre_A'}$, and $tx^{dep}_{pre_{AA}'}$, Bob publishes $UTXO^{C^{dep}}$. This step is crucial because if Bob publishes $UTXO^{C^{dep}}$ before obtaining Alice's signatures for $tx^{dep}_{pre_A'}$ (or the other transactions), his tokens may remain locked in $C^{dep}$ (i.e., in $UTXO^{C^{dep}}$ in a UTXO-based model) indefinitely if Alice withholds her signatures.
\end{itemize}

Once the above steps are done, Alice will send the transaction to the Blockchain, i.e., it send to a miner, which will broadcast it to the network. Next, we will discuss our network setup.

Furthermore, we have also implemented and evaluate the \prot\ for Ethereum. For the ease of implementation\footnote{You can find the Solidity implementation of \prot\ here \url{https://github.com/nitinawathare14/DEMBAEth.git}}, we haven't implemented signature verification in it. Moreover to make the comparison fair, we didn't implement the signature verification for He-HTLC and MAD-HTLC as well, with which we have compared \prot.

\subsection{Evaluation Setup}
\label{sec:evaluationSetup}
We set up a private blockchain network consisting of 50 Oracle Virtual Machines to measure the protocol completion time—i.e., the time required to successfully complete the exchange—in a real network environment with potential blockchain forks \footnote{It is worth noting that, for comparing \prot\ in terms of transaction costs, deploying and sending transactions on a single-node blockchain network is sufficient.}. In this setup, each VM is provisioned with a 2.19 GHz dual-core CPU, 8 GB RAM, and a 128 GB HDD. All VMs run Ubuntu 20.04 and are configured with download and upload bandwidths of 1 Gbps and 100 Mbps, respectively. Each VM hosts a single blockchain node.

Throughout the experiment, we control the block mining difficulty to maintain an average block generation time of 15 seconds for Ethereum and 10 minutes for Bitcoin, reflecting the time required to solve the Proof-of-Work (PoW) puzzle. As a result, the average block inter-arrival time in our experiment equals the PoW solving time (15 seconds or 10 minutes), plus the block propagation delay, block creation time, and block validation time. While Ethereum 2.0~\cite{ethereum2docs}, based on Proof-of-Stake (PoS), was already released at the time of our experiments, we deliberately focus on the earlier Proof-of-Work (PoW) version to ensure a consistent comparison with Bitcoin.

We allocate mining power to each node in our 50-node setup based on the mining power distribution of the top 50 miners in the Ethereum network~\cite{ethMining}. Collectively, these miners account for approximately 99.98

Additionally, we replicate the geographical distribution of these miners by assigning our nodes locations consistent with the data from~\cite{ethMining}. We configure inter-node latencies using Linux’s ${\tt tc}$ tool to match the real-world ping delays corresponding to these geographic locations~\cite{pingDelay}, effectively simulating the network delays of Ethereum’s mainnet.

For network topology, inspired by the Bitcoin network-where node degrees follow a power-law distribution~\cite{bitcoinTopology}-we design our experiment such that each node connects to a random subset of peers, ensuring the degree distribution adheres to a power-law.


\subsection{Evaluation Result}
In this section, we present the evaluation results. Specifically, we use two key metrics—transaction cost and time to completion—to compare the performance of \prot\ against state-of-the-art protocols.


The results for Bitcoin are shown in Figure~\ref{fig:transSizeComparison}, where \prot\ reduces Alice's redemption cost by 42\% compared to MAD-HTLC but incurs 17\% higher cost than He-HTLC, as it requires two transactions instead of one. In contrast, for Bob’s refund, \prot\ lowers the cost by 42\% relative to He-HTLC and by 150\% compared to MAD-HTLC. Specifically, the transaction size in \prot\ is 48 bytes smaller than in He-HTLC and 294 bytes smaller than in MAD-HTLC.

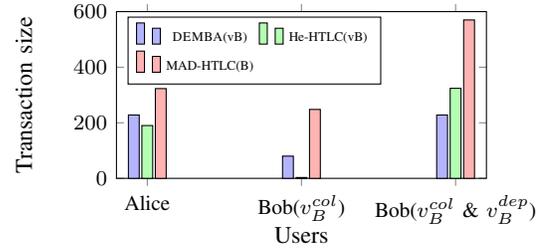
\begin{figure}[t!]
    \centering
    \pgfplotsset{footnotesize,height=3.8cm, width=0.84\linewidth}
    \begin{tikzpicture}
    \begin{axis}[
        ybar,
        bar width=4.1pt,
        /pgfplots/ybar=1pt,
        legend pos=north west,
        legend columns=2,
         legend style={font=\tiny},
        ymin=0,
        ymax=600,
        width=6.5cm,
        symbolic x coords={Alice, Bob($v^{col}_B$), Bob($v^{col}_B$ \& $v^{dep}_B$)},
         xticklabel style={name=tick no \ticknum},
        xtick=data,
        xlabel={Users},
        ylabel= {Transaction size}
        ]

        \addplot[fill=blue!30!white] table [x=users, y=DEMBA, col sep=comma] {data/gethPeerSimilarity.csv};
        \addplot[fill=green!30!white] table [x=users, y=He-HTLC, col sep=comma] {data/gethPeerSimilarity.csv};
        \addplot[fill=red!30!white] table [x=users, y=MAD-HTLC, col sep=comma] {data/gethPeerSimilarity.csv};
        
        \addlegendentry{\prot(vB)}
        \addlegendentry{He-HTLC(vB)}
        \addlegendentry{MAD-HTLC(B)}
    \end{axis}
    \end{tikzpicture}
    \caption{Comparison of \prot, He-HTLC, and MAD-HTLC with respect to the transaction cost (size) on the Bitcoin Network}
    \label{fig:transSizeComparison}
\end{figure}




Figure~\ref{fig:transSizeComparisonEthereum} presents the Ethereum results, showing trends similar to Bitcoin but with smaller margins. \prot\ reduces Alice’s claim cost by 38\% compared to MAD-HTLC, but it is 7\% costlier than He-HTLC, as it requires two transactions instead of one. For Bob’s refund, \prot\ lowers the cost by 23\% vs. He-HTLC and 71\% vs. MAD-HTLC. Furthermore, upon closer analysis, we conclude that the observed results stem from the same reason—\prot\ requires Alice to send two transactions to redeem $v^{dep}$, whereas He-HTLC requires only one.




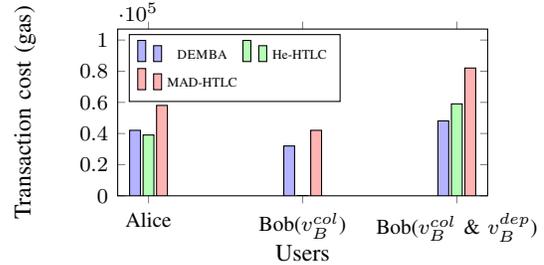
\begin{figure}[t!]
    \centering
    \pgfplotsset{footnotesize,height=3.8cm, width=0.84\linewidth}
    \begin{tikzpicture}
    \begin{axis}[
        ybar,
        bar width=4.1pt,
        /pgfplots/ybar=1pt,
        legend pos=north west,
        legend columns=2,
        legend style={font=\tiny},
        ymin=0,
        ymax=107060,
        width=6.5cm,
        symbolic x coords={Alice, Bob($v^{col}_B$), Bob($v^{col}_B$ \& $v^{dep}_B$)},
         xticklabel style={name=tick no \ticknum},
        xtick=data,
        xlabel={Users},
        ylabel= {Transaction cost (gas)}
        ]

        \addplot[fill=blue!30!white] table [x=users, y=DEMBA-ETH, col sep=comma] {data/gethPeerSimilarity.csv};
        \addplot[fill=green!30!white] table [x=users, y=He-HTLC-ETH, col sep=comma] {data/gethPeerSimilarity.csv};
        \addplot[fill=red!30!white] table [x=users, y=MAD-HTLC-ETH, col sep=comma] {data/gethPeerSimilarity.csv};
        
        \addlegendentry{\prot}
        \addlegendentry{He-HTLC}
        \addlegendentry{MAD-HTLC}
    \end{axis}
    \end{tikzpicture}
    \caption{Comparison of \prot, He-HTLC, and MAD-HTLC with respect to the transaction cost (gas used) on the Ethereum Blockchain Network}
    \label{fig:transSizeComparisonEthereum}
\end{figure}

\begin{figure*}[t!]
    \centering
    \begin{subfigure}{0.31\linewidth}
    \pgfplotsset{footnotesize,height=3.8cm, width=0.60\linewidth}
    \begin{tikzpicture}
    \begin{axis}[
        ybar,
        bar width=4.1pt,
        /pgfplots/ybar=1pt,
        legend pos=north west,
        legend columns=2,
        legend style={font=\tiny},
        ymin=0,
        ymax=10,
        width=6.0cm,
        symbolic x coords={$v^{dep}=v^{col}$, $v^{dep}=2v^{col}$,  $v^{dep}=4v^{col}$},
         xticklabel style={name=tick no \ticknum},
        xtick=data,
        xlabel={Variation in $v^{dep}$},
        ylabel= {Time to complete (minutes)}
        ]


        \addplot[fill=blue!30!white] [error bars/.cd, y explicit,y dir=both,] table [x=users, y=DEMBAAlice, y error=DEMBAAliceCfi, col sep=comma]{data/timetocomplete.csv};

        \addplot[fill=green!30!white] [error bars/.cd, y explicit,y dir=both,] table [x=users, y=HeHTLCAlice, y error=HeHTLCAliceCfi, col sep=comma] {data/timetocomplete.csv};
        \addplot[fill=red!30!white] [error bars/.cd, y explicit,y dir=both,] table [x=users, y=MDHTLCAlice, y error=MDHTLCAliceCfi, col sep=comma] {data/timetocomplete.csv};

        \addlegendentry{\prot}

        \addlegendentry{He-HTLC}
        \addlegendentry{MAD-HTLC}
    \end{axis}
    \end{tikzpicture}
    \caption{Variation in protocol completion time when Alice redeem $v^{dep}$}
    \label{fig:Aliceredeem}

    \end{subfigure}\hfill
    \begin{subfigure}{0.31\linewidth}\pgfplotsset{footnotesize,height=3.8cm, width=0.60\linewidth}
    \begin{tikzpicture}
    \begin{axis}[
        ybar,
        bar width=4.1pt,
        /pgfplots/ybar=1pt,
        legend pos=north west,
        legend columns=2,
        legend style={font=\tiny},
        ymin=0,
        ymax=10,
        width=6.0cm,
        symbolic x coords={$v^{dep}=v^{col}$, $v^{dep}=2v^{col}$,  $v^{dep}=4v^{col}$},
         xticklabel style={name=tick no \ticknum},
        xtick=data,
        xlabel={Variation in $v^{dep}$},
        ]

        \addplot[fill=blue!30!white] [error bars/.cd, y explicit,y dir=both,] table [x=users, y=DEMBABob, y error=DEMBABobCfi, col sep=comma] {data/timetocomplete.csv};

        \addplot[fill=green!30!white] [error bars/.cd, y explicit,y dir=both,] table [x=users, y=HeHTLCBob, y error=HeHTLCBobCfi, col sep=comma] {data/timetocomplete.csv};
        \addplot[fill=red!30!white] [error bars/.cd, y explicit,y dir=both,] table [x=users, y=MDHTLCBob, y error=MDHTLCBobCfi, col sep=comma] {data/timetocomplete.csv};

        \addlegendentry{\prot}

        \addlegendentry{He-HTLC}
        \addlegendentry{MAD-HTLC}
    \end{axis}
    \end{tikzpicture}
    \caption{Variation in protocol completion time when Bob redeem $v^{col}$ (or $v^{col}_B$)}
    \label{fig:Bobredeem}
    \end{subfigure}\hfill
    \begin{subfigure}{0.31\linewidth}\pgfplotsset{footnotesize,height=3.8cm, width=0.60\linewidth}
    \begin{tikzpicture}
    \begin{axis}[
        ybar,
        bar width=4.1pt,
        /pgfplots/ybar=1pt,
        legend pos=north west,
        legend columns=2,
        legend style={font=\tiny},
        ymin=0,
        ymax=27,
        width=6.0cm,
        symbolic x coords={$v^{dep}=v^{col}$, $v^{dep}=2v^{col}$,  $v^{dep}=4v^{col}$},
         xticklabel style={name=tick no \ticknum},
        xtick=data,
        xlabel={Variation in $v^{dep}$},
        ]

        \addplot[fill=blue!30!white] [error bars/.cd, y explicit,y dir=both,] table [x=users, y=DEMBABobBoth, y error=DEMBABobBothCfi, col sep=comma] {data/timetocomplete.csv};

        \addplot[fill=green!30!white] [error bars/.cd, y explicit,y dir=both,] table [x=users, y=HeHTLCBobBoth, y error=HeHTLCBobBothCfi, col sep=comma] {data/timetocomplete.csv};
        \addplot[fill=red!30!white] [error bars/.cd, y explicit,y dir=both,] table [x=users, y=MDHTLCBobBoth, y error=MDHTLCBobBothCfi, col sep=comma] {data/timetocomplete.csv};

        \addlegendentry{\prot}

        \addlegendentry{He-HTLC}
        \addlegendentry{MAD-HTLC}
    \end{axis}
    \end{tikzpicture}
    \caption{Variation in protocol completion time when Bob redeems  $v^{col}$ ( or $v^{col}_B$) and $v^{dep}$}
    \label{fig:BobredeemBoth}

    \end{subfigure}
    
    \caption{Comparison of \prot, HTLC,  He-HTLC, and MAD-HTLC with time to completion on the Bitcoin Blockchain Network}
    \label{fig:timeToCompletion}
\end{figure*}
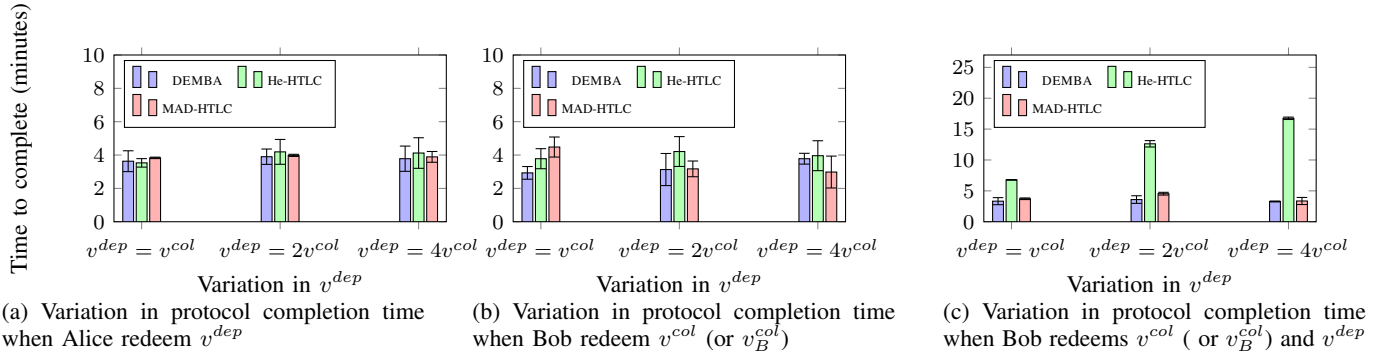


We further compared \prot\ with He-HTLC and MAD-HTLC in terms of time to complete (TTC), defined as the time from transaction initiation to final transfer. Results were averaged over 5 different runs, using $f^{dep}_A = f^{dep}_B = 0.0001 \times v^{dep}$, and reported with 95\% confidence intervals.

Our observations are as follows: when Alice redeem $v^{dep}$ and Bob redeem  $v^{col}$ (or $v^{col}_B$ in the case of \prot), the TTC for all protocols, including \prot, shows negligible variation with changes in the exchange amount $v^{dep}$, as depicted in Figure~\ref{fig:Aliceredeem} and~\ref{fig:Bobredeem}  However, when Bob redeems both $v^{dep}$ and $v^{col}$, all protocols except He-HTLC continue to show negligible variation in TTC. He-HTLC, on the other hand, exhibits an increase in completion time because its parameter $l$ depends on $v^{dep}$. Specifically, to satisfy the condition $v^{col} \ge \left( \dfrac{v^{dep}}{\kappa - 1} \right) + f$, an increase in $v^{dep}$ necessitates a corresponding increase in $\kappa$, which is directly proportional to the parameter $l$. Figure~\ref{fig:BobredeemBoth} demonstrates this behavior.

From this, we conclude that \prot's performance in terms of protocol completion time is on par with, if not better than, the existing state-of-the-art protocols under consideration. Moreover, \prot\ achieves this while incurring lower costs compared to them.

\section{Related Work}
\label{sec:relatedwork}

This section reviews related work in three areas: bribery attacks and countermeasures, mechanisms for fair exchange exploited in our approach, and strategies for optimizing transaction placement to maximize incentives—an alternative bribery attack.

\noindent{\bf Bribery Attacks: }
There have been proposals for bribery attacks aimed specifically at HTLC. In \cite{Winzer2019}, temporary censorship attacks are suggested where attackers can censor transactions from a victim until a timeout, potentially allowing Bob to censor Alice's transaction and reverse the payment. However, their attack focus primarily on collaboration of one of the parties involved in the exchange to the consensus nodes (miners).


The incentive compatibility of blockchain protocols has received extensive research attention. For example, the selfish mining attack~\cite{eyal2013majority} demonstrates that Bitcoin is not incentive-compatible, and miners could gain more by selectively withholding blocks.
Bonneau et al.\cite{Bonneau2016} demonstrate how miners can be bribed, potentially utilizing the blockchain itself, to gain control of the system temporarily. McCorry et al.\cite{McCorryEtAl} employ smart contracts to facilitate sophisticated bribery attacks, such as transaction censorship and even the modification of the blockchain's history. Other approaches to implementing bribery have also been proposed. For instance, Liao~\cite{Liao2017} shows that bribes can be embedded in transaction fees to incentivize historical rewrites. According to~\cite{SarishtWadhwa2022}, the reverse bribery attack involves miners bribing the parties participating in the exchange to split the profits earned. This results in both parties receiving more collateral than they would have if they had acted honestly. 

HaoChung et al. proposed a solution named PONYTA~\cite{HaoChung2022} to prevent miner collusion with Alice or Bob to gain more collateral. To achieve this, they required Alice and Bob to each deposit $c_a$ and $c_b+v$ into the contract, which is hash locked with $h_a$ and $h_b$ respectively. Here, $c_a > v$ and $c_b> 2v$, where $v$ is the exchange amount. If any party, denoted by $P$, sends ($pre_a$, $pre_b$) such that $H(pre_a) = h_a$ and $H(pre_b) = h_b$, then the solution sends $v$ to party $P$, while all the remaining coins are burnt. In PONYTA, the protocol requires Alice and Bob to deposit collateral exceeding the value of the tokens being exchanged, making it less favorable for users.

Dimitris et al.~\cite{karakostas2024blockchainbribingattacksefficacy} examines bribing attacks in Proof-of-Stake (PoS) distributed ledgers through a game-theoretic lens. However, It does not explicitly model long-term strategic behaviors beyond the immediate bribing rounds.


From this, we conclude that previous works have not explored the scenario where a miner bribes other miners to execute an attack—a key focus of our study. \cite{Ittay2023} analyzed the Bitcoin blockchain and found that selfish miners engaged in bribing lower-status compliant miners to maximize their profits.

\noindent{\bf Fair exchange of collateral over a blockchain}
The authors of~\cite{SarishtWadhwa2022} introduced novel fair exchange protocols that enable reverse bribery. Unlike previous studies that propose fair exchange solely between the two non-mining parties, this approach achieves fair exchange between the asymmetrical parties involved, namely the miner and Bob.

However, The approach presented in~\cite{SarishtWadhwa2022} grants Bob more authority by allowing him to broadcast the completed block, which we leverage in our study to devise a novel attack on MAD-HTLC. Our findings demonstrate that Bob can earn a similar amount of collateral as he would by attacking a naive HTLC.

\noindent{\bf Transaction placement by the miner to create a block: } Miners are usually portrayed as rational actors who prioritize transactions with higher fees to include in their blocks. However, according to ~\cite{SarishtWadhwa2022}, they can achieve better outcomes by actively generating more opportunities, such as by collaborating with certain parties involved in the exchange and sharing the net profit. These miners, who create such opportunities, are referred to as active rational miners. ~\cite{Daian2020Frontrunning} shows one such way to create an opportunity by manipulating transaction orderings and mounting, e.g., frontrunning attacks. This surplus is referred to as {\em Miners Extractable Value (MEV)}. 

In contrast, \prot\ is resistant to such attacks, as the potential rewards for dishonest behavior are lower than those earned by acting honestly.





\begin{appendices}
\section{Discussion \& Future Work}
\label{sec:futurework}
\noindent{\bf Different bribery scenarios:}
Bribery can involve any parties in a protocol, with miners actively seeking bribes or acting rationally based on their transaction pools. Existing models like MAD-HTLC and He-HTLC overlook miner collusion, focusing only on bribery between miners and a single party. Considering these overlooked scenarios, we propose the M2MBA attack, where miners collaborate to share bribery profits, and B3A, a stronger attack on MAD-HTLC allowing Bob to earn rewards comparable to attacking naive HTLC.


More broadly, any application-level protocol that relies on miners as enforcers must account for the risk of active miner participation in bribery and manipulation of application logic. Exploring such miner-driven application-level attacks presents a promising direction for future research.



\noindent{\bf Our solution, i.e. \prot:} Our proposed solution, \prot, mitigates bribery among miners by removing their control over token confiscation and structuring the exchange into two distinct phases. In the first phase, the participating parties commit to either proceed with the exchange or abort it—returning both the exchange tokens and collateral to Bob. In the second phase, this commitment is executed, and the actual token transfer occurs.

While \prot\ addresses bribery scenarios within a single\\ blockchain network, its effectiveness in cross-chain exchanges remains uncertain. Even when following \prot, inter-blockchain exchanges may still be vulnerable to bribery attacks. Adapting \prot\ to securely handle cross-chain exchanges is an open research challenge.

\noindent{\bf On Assumptions:} In our analysis, we make two primary assumptions: (i) only one block is mined per round, and (ii) token values remain constant throughout the protocol execution. The first assumption is reasonable for our attacks, as we only need to demonstrate the existence of scenarios where the attacks succeed. For \prot, since miners no longer have the ability to confiscate tokens, they gain no additional advantage by forking the blockchain. The second assumption holds for exchanges within a single blockchain network. However, when \prot\ is applied to inter-blockchain exchanges, analyzing the impact of token price volatility becomes a crucial research direction, as different tokens may exhibit varying growth rates.


\section{Smart Contract to perform Bribery Attack on naive HTLC} 
In this section, we provide a sample smart contract (Algorithm~\ref{algo:Bribery-Contract}) that demonstrates a bribery attack on a naive HTLC. We will extend this contract to implement the M2MBA attack. The contract is initialized by the attacker (Bob, in our case) using the {\em init} method, with a deposit of $v^{dep}$.

The contract consists of two primary methods: {\em requestBribe} and {\em claimBribe}. The {\em requestBribe} method allows miners to reserve their future bribe by committing to censor $tx_A^{dep}$ before the attack succeeds. Once the attack is successful, {\em claimBribe} facilitates the actual bribe transfer from Bob’s deposit to the participating miners.

Notably, calling {\em claimBribe} requires $pre_A$ as input to prove that Alice has revealed it. If Alice remains silent until time $T$, miners cannot claim the bribe, and Bob is refunded his full deposit of $v^{dep}$.



\begin{algorithm}[h!]
  \caption{Bribery Contract ($C_{Bob}$)}\label{algo:Bribery-Contract}
  \begin{algorithmic}[1]
    \State {\bf Variables: }
    \State $v^{dep} \gets 0$; $balLeft \gets 0$;
    
    \State $Bal \gets \{\}$ {\color {gray}//mapping from address to balances}\\
    
    \State {\bf Constants: }
    \State $br = f^{dep}_A + \epsilon$\\

    \State {\bf Functions: }
    \State {\color {gray}//Called during contract deployment}
    \Procedure{init}{ $val$ }
        \State $v^{dep} \gets val$; $balLeft \gets v^{dep}$
    \EndProcedure\\

    \State {\color {gray}//Called by the miners to lock their bribe}
    \Procedure{requestBribe}{ }
        \If{caller is not the miner of the current block {\bf or} function is called more than once in the block}
        \State {\bf return}
        \EndIf

        \If{height of the current block is $\leq T$ }
        \State $Bal[caller] \gets Bal[caller] +1$
        \State $balLeft \gets balLeft - br$
        \EndIf
    \EndProcedure\\

    \State {\color {gray}//Anyone can call to claim their bribe}
    \Procedure{claimBribe}{$preImage$}
        \State $count \gets 0$
        
        \If{$preImage$ is not a $pre_A$ {\bf or} $tx^{dep}_A$ is included in the block {\bf or} $tx^{dep}_B$ is not included in the block till the current block {\bf or} $balLeft-(br+f_B^{dep}+f_B^{C_{Bob}})<0$} 
            \State {\bf return}
        \EndIf

        \For{Each $(addr, bal) \in Bal$}
            \State Send $br*bal$ token of bribe  to $addr$
            \State $count \gets count +1$

        \EndFor
        \State {\color {gray}//Gain to the caller}
        \State send $br$ to the caller (miner).
        \State {\color {gray}//Send the remaining tokens to Bob}
        \State send $v_{dep} - (count*br +br)$ to Bob 
         
        \State $Bal \gets \{\}$ {\color {gray}//Reset the values}
        \State $v^{dep} \gets 0$
    \EndProcedure
    
    \State {\color {gray}//Refund to the bob if miner include $tx^{dep}_A$}
    \Procedure{refundToBob}{ $val$ }
        \If{$tx^{dep}_A \in$ block with block height $\leq T$}
        \State send $v^{dep}$ to Bob.
        \EndIf
    \EndProcedure\\
  \end{algorithmic}
\end{algorithm}

\section{The success-dependent reverse bribery attack (SDRBA) and  Hybrid delay-reverse bribery attack
(HyDRA).}
\label{apx:SDRBA}

\begin{figure*}[t!]
    \centering
    \begin{subfigure}{0.29\linewidth}
    \centering
    \includegraphics[height=3.25cm, width=0.98\linewidth]{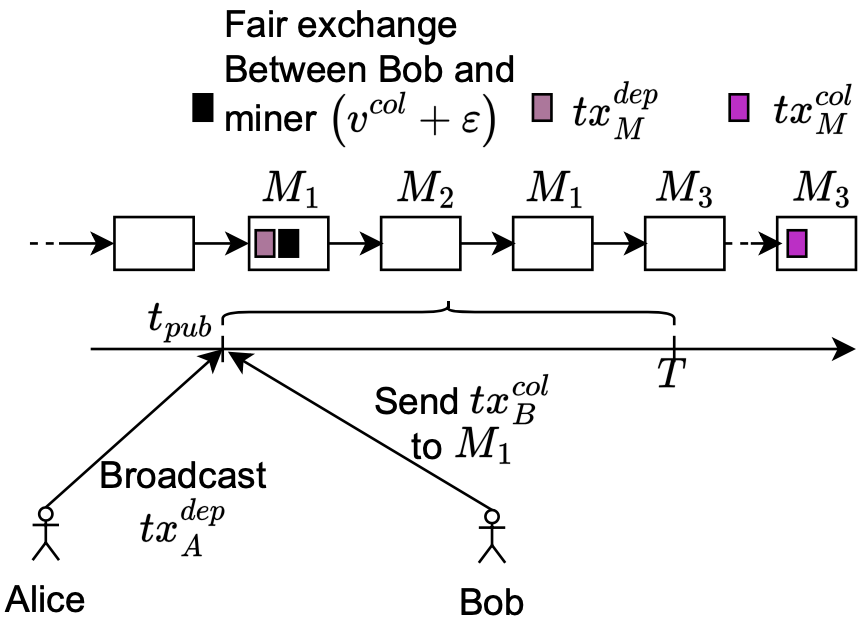}
    \caption{}
    \label{fig:MD_HTLC_SDRBA}
    \end{subfigure}
    \begin{subfigure}{0.28\linewidth}
    \centering
    \includegraphics[height=3.25cm, width=0.98\linewidth]{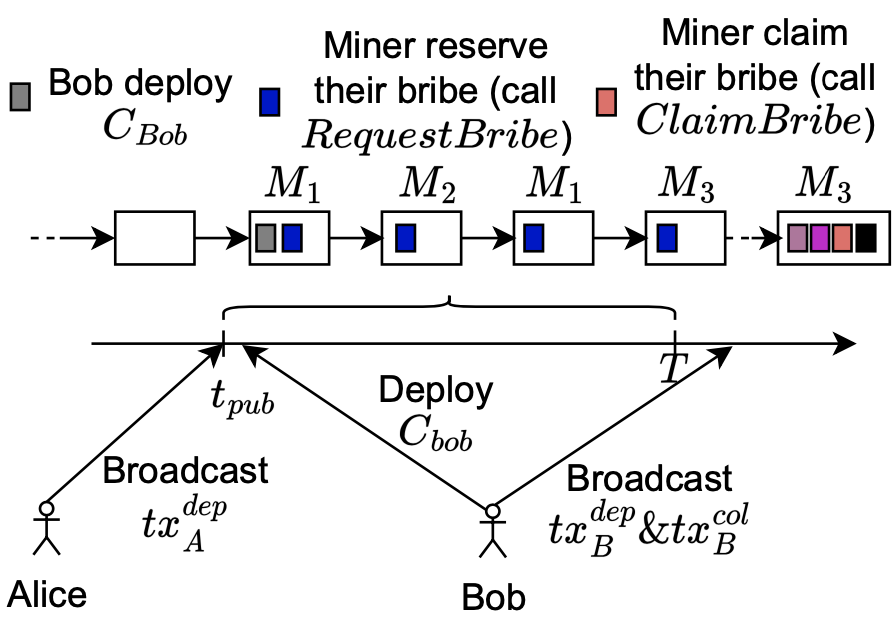}
    \caption{}
    \label{fig:MD_HTLC_HyDRA}
    \end{subfigure}
    \begin{subfigure}{0.34\linewidth}
    \centering
    \includegraphics[height=3.25cm, width=0.98\linewidth]{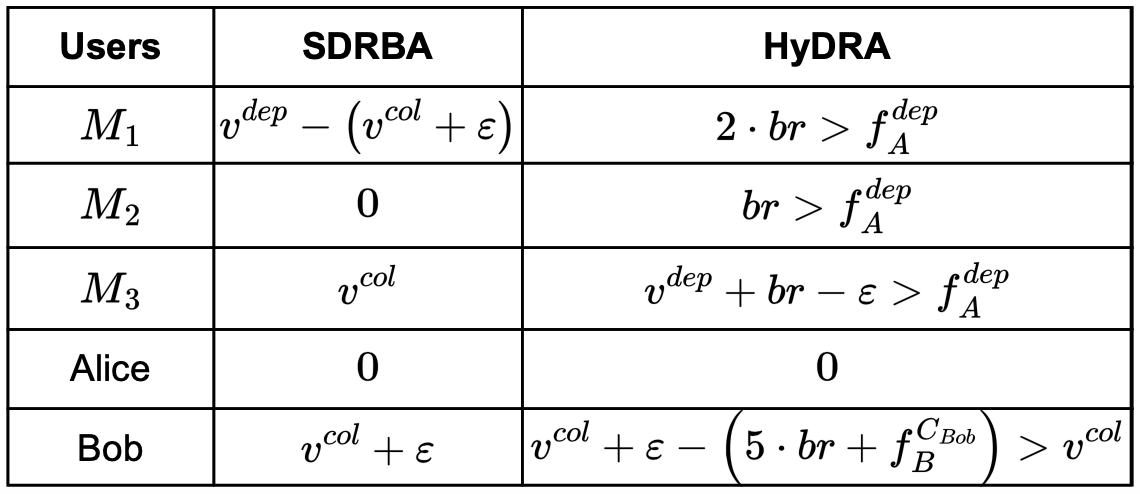}
    \caption{}
    \label{fig:MD_HTLC_SDRBA_HyDRA_table}
    \end{subfigure}
    \caption{(a)SDRBA on MAD-HTLC, (b)HyDRA on MAD-HTLC, (c)Utility gained by each party after performing SDRBA and HyDRA on MAD-HTLC}
    \label{fig:MD_HTLC_SDRBA_HyDRA}
\end{figure*}

The success-dependent reverse bribery attack (SDRBA) demonstrates that a reverse bribery attack can be successful when $v^{dep} > v^{col}$. In SDRBA, a miner, $M_1$ in our example in Figure~\ref{fig:MD_HTLC_SDRBA}, pays a bribe ($>v^{col}$) to Bob only if it can redeem $v^{dep}$ through {\em dep-M}. However, $M_1$ must reveal the $pre_A$ and $pre_B$ while redeeming $v^{dep}$, so all miners have access to $pre_A$ and $pre_B$, and they compete to redeem $v^{col}$ after timeout $T$. The probability of miner $M_1$ mining the block and redeeming $v^{col}$ is $\lambda_{M_1}/\lambda$, where $\lambda_{M_1}$ is $M_1$'s block generation rate. Therefore, in the worst case of SDRBA, where another miner redeems $v^{col}$ through {\em col-M}, the bribing miner ($M_1$) earns $v^{dep} -(v^{col} + \epsilon)$, as depicted  in Figure~\ref{fig:MD_HTLC_SDRBA_HyDRA_table}. Consequently, if $v^{dep} \leq v^{col}$, the attack may not be advantageous.

To overcome the disadvantage of SDRBA, the HyDRA is proposed which is designed to work for any values of $v^{dep}$ and $v^{col}$. In HyDRA, Bob initiates the process by bribing miners to exclude transaction $tx^{dep}_A$ until time $T$, at a cost of $k*br$, where $k$ is the number of blocks mined in time $T-t_{pub}$. After time $T$, a miner (For example, $M_3$ in the Figure~\ref{fig:MD_HTLC_HyDRA}) engages in SDRBA, where it exchanges $v^{col} + \epsilon$ with Bob and redeem $v^{dep} + v^{col}$.
So Bob end up earning $v^{col} + \epsilon - [(k+1)*br+f^{C_{Bob}}_B]$, where $\epsilon > [(k+1)*br+f^{C_{Bob}}_B]$. The example depicted in the Figure~\ref{fig:MD_HTLC_SDRBA_HyDRA} has $k = 4$, so it requires $\epsilon > 5*br$. Alice can also collude with a miner to perform HyDRA by deploying the contract, say $C_{Alice}$, and earn $v^{dep} + \epsilon- [(k+1)*br+f^{C_{Alice}}_A]$ in a similar manner. 

\section{Solo mining vs Pool mining}
\label{apx:soloVsPool}


This section explains why miners prefer pool mining over solo mining. The key factor is reward variance—pool mining offers lower variance, leading to more stable and predictable income. In probabilistic systems like cryptocurrency mining, this stability is often valued over higher but inconsistent solo rewards. We begin by outlining the underlying assumptions.\\

\noindent{\bf Assumptions and Notations: }
Let \( E_{\text{solo}} \) and \( E_{\text{pool}} \) denote the expected rewards from solo and pool mining, respectively. Let \( H \) be the total network hash rate, \( h \) the miner's hash rate, \( N \) the number of miners in the pool, \( R \) the fixed block reward, and \( f \) the pool fee. Let \( \lambda \) and \( \lambda_i \) be the average mining rates of the network and miner \( i \), respectively. Given these, we analyze a miner’s rewards under solo and pool mining.\\

\noindent{\bf Rewards earned by the miner in solo mining and pool mining: }
The individual miner’s probability of successfully mining a block $X$ in a given time period follows: \[
X \sim \text{Poisson}(\lambda_i), \quad \text{where } \lambda_i = \frac{h}{H} \cdot \lambda.
\]
So the expected reward for the miner in solo mining using the expected value of a Poisson distribution ($E[X] = \lambda_i$) is $E_{\text{solo}} = \lambda_i \cdot R = \frac{h}{H} \cdot \lambda \cdot R.$

Similarly, in pool mining, miners combine their hash rates to form a larger effective hash rate. The pool’s combined hash rate is $N \cdot h$, and the probability of the pool mining $k$ blocks follows:
\[
\lambda_{\text{pool}} \sim \text{Poisson}(\lambda_{\text{pool}}), \quad \text{where } \lambda_{\text{pool}} = \frac{N \cdot h}{H} \cdot \lambda.
\]

Each miner in the pool receives a share of the reward proportional to their contribution $h$. The expected reward for an individual miner in the pool, after accounting for the pool fee $f$, is:
\[
E_{\text{pool}} = (1 - f) \cdot \left( \frac{h}{N \cdot h} \cdot \lambda_{\text{pool}} \cdot R \right)=(1 - f) \cdot \frac{h}{H} \cdot \lambda \cdot R.
\]

Given this we are all set to find a relation between $E_{\text{pool}}$ and $E_{\text{solo}}$. To compare $E_{\text{solo}}$ and $E_{\text{pool}}$, we compute the ratio:
\[
\frac{E_{\text{solo}}}{E_{\text{pool}}} = \frac{\frac{h}{H} \cdot \lambda \cdot R}{(1 - f) \cdot \frac{h}{H} \cdot \lambda \cdot R} = \frac{E_{\text{solo}}}{E_{\text{pool}}} = \frac{1}{1 - f}.
\]

Since $f > 0$, we have $\frac{1}{1 - f} > 1$, implying $E_{\text{solo}} > E_{\text{pool}}$. Thus, the expected reward for solo mining is higher in absolute terms, but this does not account for variance. Most miners are risk-averse, meaning they prefer smaller, steady rewards over infrequent, larger payouts. A risk-averse miner values stability and is willing to sacrifice a small portion of their expected reward (pool fee) for predictable income. This is because of the higher variance in the rewards in case of solo mining. So next we will consider the variance. 


For the further discussion, let $\text{Var}_{\text{solo}}$ and $\text{Var}_{\text{pool}}$ be the variance in case of solo mining and pool mining, respectively. So its easy to conclude, $\text{Var}_{\text{pool}}= \frac{\text{Var}_{\text{solo}}}{N}$, where $N$ is the number of miners in the pool. 

As we have already discussed, a risk-averse miner values stability, represented by a utility function $U(W)$, where $W$ is the miner's reward. A common choice is the negative exponential utility function,  $U(W) = -e^{-\alpha W}$, where $\alpha > 0$ represents the miner’s degree of risk aversion.

For small variances, the expected utility can be approximated using a Taylor expansion: $E[U(W)] \approx U(E[W]) + \frac{1}{2} U''(E[W]) \cdot \text{Var}(W)$,
where $U''(E[W]) = -\alpha^2 e^{-\alpha E[W]}$.
Thus,
\[
E[U(W)] \approx -e^{-\alpha E[W]} \left( 1 - \frac{\alpha^2 \text{Var}(W)}{2} \right).
\]

Given this, we will next calculate the utility for a solo mining and a pool mining

\[
E[U(W_{\text{solo}})] \approx -e^{-\alpha E_{\text{solo}}} \left(1 - \frac{\alpha^2 \text{Var}_{\text{solo}}}{2}\right).
\]

$E[U(W_{\text{pool}})] \approx -e^{-\alpha E_{\text{pool}}} \left(1 - \frac{\alpha^2 \text{Var}_{\text{pool}}}{2}\right) = -e^{-\alpha (1 - f)E_{\text{solo}}} \left(1 - \frac{\alpha^2 \text{Var}_{\text{solo}}}{2N}\right)$

Now we have utility functions for solo mining and pool mining, i.e. $E[U(W_{\text{solo}})]$ and $E[U(W_{\text{solo}})]$. Next we will compare these utility functions to reach to the conclusion. We compare the utilities by calculating the difference,

$\Delta U = E[U(W_{\text{pool}})] - E[U(W_{\text{solo}})]$

\[ 
 = -e^{-\alpha (1 - f)E_{\text{solo}}} \left(1 - \frac{\alpha^2 \text{Var}_{\text{solo}}}{2N}\right) - -e^{-\alpha E_{\text{solo}}} \left(1 - \frac{\alpha^2 \text{Var}_{\text{solo}}}{2}\right)
\]

In the above equation, expanding $e^{-\alpha (1 - f)E_{\text{solo}}}$ around $E_{\text{solo}}$ for small $f$, i.e, $e^{-\alpha (1 - f)E_{\text{solo}}} \approx e^{-\alpha E_{\text{solo}}} (1 + \alpha f E_{\text{solo}})$, will gives the approximate value of $\Delta U$ after simplification. 

\[
\Delta U \approx -e^{-\alpha E_{\text{solo}}} \left[ \alpha f E_{\text{solo}} - \frac{\alpha^2 \text{Var}_{\text{solo}}}{2} + \frac{\alpha^2 \text{Var}_{\text{solo}}}{2N} \right].
\]

We can interpret the above equation as follows:

\begin{itemize}
    \item \textbf{First Term} ($\alpha f E_{\text{solo}}$): Represents the utility reduction due to the pool fee $f$. This term reduces the miner’s expected utility in pool mining.
    \item \textbf{Second Term} ($-\frac{\alpha^2 \text{Var}_{\text{solo}}}{2}$): Represents the penalty for high variance in solo mining.
    \item \textbf{Third Term} ($\frac{\alpha^2 \text{Var}_{\text{solo}}}{2N}$): Represents the reduced penalty due to variance in pool mining. For large $N$, this term is small, making pool mining more stable.
\end{itemize}

Given this, we can conclude that, If $f$ is small, the utility reduction due to the pool fee is negligible. Furthermore, For large $N$, the variance penalty in pool mining becomes significantly smaller than in solo mining ($\frac{1}{N}$-factor reduction). Lastly, Risk-averse miners prioritize stability, and the reduced variance in pool mining ensures that $\Delta U > 0$ for reasonable values of $f$, $N$, and $\text{Var}_{\text{solo}}$. Thus, over time, the stability from lower variance outweighs the small reduction in expected rewards due to the pool fee, making pool mining more lucrative for risk-averse miners.

\section{M2MBA Analysis Lemma's proofs}
\label{sec:M2MBAproof}

\noindent{\begin*{\bf Lemma 1}
\label{lm:1}
In $\zeta^{mba}(k, red)$, where $k \leq T$, if $(T-t_{pub})*br*\lambda_i/\lambda_{col}>f^{dep}_A$, then it is a dominant move for active miners to accept the bribe by censoring the transaction $tx^{dep}_A$ instead of including it in the block.
\end*}

\begin{proof}
    By including $tx_A^{dep}$ miner would earn $f^{dep}_A$. On the other hands, the expected bribe that a miner would earn by censoring the transaction for a block is $br*\lambda_i/\lambda_{col}$.
    $T-t_{pub}$ rounds produce $T-t_{pub}$ blocks. So the total expected bribe that a miner earn is  $(T-t_{pub})*br*\lambda_i/\lambda_{col}$. According to our assumption this value exceeds $f^{dep}_A$. So accepting the bribe is the dominant move.
\end{proof}

\noindent{\begin*{\bf Lemma 2}
\label{lm:2}
In $\zeta^{mba}(k, red)$, where $k \leq T$, if $v^{col}*\lambda_i/\lambda{col} + (T-t_{pub})*br*(2*\lambda_i/\lambda_{col} - 1)>f^{dep}_A$, then it is a dominant move for active miners to offer a bribe to other miners to censor the transaction $tx^{dep}_A$ and perform the attack rather than including it in the block.
\end*}

\begin{proof}
    The attack consists of two phases. In the first phase, a bribe is offered to censor the transaction $tx^{dep}_A$. In the second phase, the miner attempts to mine the block that includes the transaction $tx^{col}_M$ in round $T+1$, where Bob releases $tx^{dep}_B$ by revealing $pre_B$. The miner, referred to as the confiscating miner $M_i$, will earn a expected reward of $v^{col}*\lambda_i/\lambda{col}$ by successfully mining the block that includes $tx^{col}_M$.

    The confiscating miner $M_i$ offers a bribe to other miners, including himself, to censor the transaction $tx^{dep}_A$ for round $T-t_{pub}$. 
    The bribe that $M_i$ offers to himself is $(T-t_{pub})*br*\lambda_i/\lambda{col}$. On the other hands, the bribe that $M_i$ offers to other active miners is $(T-t_{pub})*br*(1- \lambda_i/\lambda{col})$.
    
    So the net gain of the miner by offering a bribe and perform the attack is $v^{col}*\lambda_i/\lambda{col} + (T-t_{pub})*br*\lambda_i/\lambda{col} - (T-t_{pub})*br*(1- \lambda_i/\lambda{col}$), which boils down to $v^{col}*\lambda_i/\lambda{col} + (T-t_{pub})*br*(2*\lambda_i/\lambda_{col} - 1)$. According to our assumption this value exceeds $f^{dep}_A$, which is the incentive that miner earns by including $tx_A^{dep}$ in the block. So offering a bribe and performing the attack is the dominant move.




\end{proof}

\noindent{\begin*{\bf Lemma 3}
\label{lm:3}
        In $\zeta^{mba}(k, red)$, where $k \leq T$. If $v_{col}*\lambda_i > f^{dep}_A$, it is the dominant move for passive miners to censor $tx^{dep}_A$ rather than include it in the block.
\end*}
\begin{proof}
    As passive miner will silently wait for the Bob to release $tx^{dep}_B$ that reveals $pre_B$. Passive miners are not involved in the either accepting bribe or offering a bribe. So the net gain of the passive miner, $M_i$, is $\lambda_i*v^{col}$. According to our assumption this value exceeds $f^{dep}_A$, which is the incentive that miner earns by including $tx_A^{dep}$ in the block. So censoring $tx_A^{dep}$ and performing the attack is the dominant move for a passive miner.
\end{proof}

\noindent{\begin*{\bf Lemma 4}
\label{lm:4}
In $\zeta^{mba}(k, red)$, where $k > T$. Let $z$ be the net earning of the miner by confiscating $v^{col}$. If $z > f^{dep}_A$, and Bob release transaction $tx^{dep}_B$ and hence release secret $pre_B$ at $T+1$, it is the dominant move for all miners to mine the block that includes $tx^{col}_M$, at $k=T+1$. (transient transition from $\zeta^{mba}(k, red)$ to $\zeta^{mba}(k, nred$-$rev$)  by including $tx^{dep}_B$ and then including $tx^{col}_M$ in the same block)
\end*}

\begin{proof}
    If miner $M_i$ decide to defer the inclusion of $tx^{col}_M$ in the block till round $t_y$ in future. $M_i$ must  ensure either of the following two things:

        \begin{itemize}
            \item $M_i$ must mine all the block from round $T+1$ to $t_y$.
            
            \item If $M_i$ couldn't mine all the block between round $T+1$ to $t_y$, they should insure that other miner who mined the block shouldn't include $tx^{col}_M$, which they can ensure by bribing them more than $z$.
        \end{itemize}

        In the former scenario, the probability of block mined by the miner $M_i$ decreases exponentially with increase of difference between $t_y$ and $T+1$ and hence the expected gain would also decreases exponentially, i.e. $z*x^{[t_y - (T+1)]}$, where x is the miner's mining power, which is $\lambda_i$ for passive miner and $\lambda_i/\lambda_{col}$ for active miners.
        

        On the other hands, in the later scenario, $M_i$ has to bribe more than $z$ to the other miner for not including $tx^{col}_M$. The expected bribe that $M_i$ needs to pay is $z*(1-x^{[t_y - (T+1)]})$, which will also increase with increase of difference between $t_y$ and $T+1$. Note that the second condition, where miners offers a bribe greater than $z$, is specific to the active miners.

        Combining the above two conditions, we can claim that it is dominant move for all the miner to mine the block that includes $tx^{col}_M$ at $T+1$, if $tx^{dep}_B$ is released, i.e., $pre_B$ is revealed. 
        

        
\end{proof}


\noindent{\begin*{\bf Lemma 5}
\label{lm:5}
We made the valid assumption that for an active miner, inequality $v^{col}\lambda_i/\lambda_{col} + (T - t_{pub})*br*(2\lambda_i/\lambda_{col} - 1) > f^{dep}_A$ holds. Similarly, for a passive miner, Assumption $v^{col}*\lambda_i > f^{dep}_A$ is also valid. 
\end*}
\begin{proof} We will prove this using a combination of theoretical analysis and empirical evidence.
   We will divide the proof into two parts: the first will focus on the active miners, and the second on the passive ones.  

   \begin{itemize}
       \item $Case_1-Active \text{ } miners:$ As we discussed earlier the active miners receives a bribe from the bribing miner, i.e. the miner who confiscate $v^{col}$, which offers a bribe to other active miners.
       We know that the value $v^{col}\lambda_i/\lambda_{col} + (T - t_{pub})*br*(2\lambda_i/\lambda_{col} - 1)$ indicates the amount the bribing miner gets after paying bribe to other active miners, including himself. So to prove this value is greater than $f^{col}_A$, its sufficient to prove that the bribe that every active miner gets from the bribing miner is greater than $f^{col}_A$ and the value left with bribing miner after paying bribe to other active miners, including himself, is greater than 0. This we are going to prove next.

       If the game is in $\zeta^{mba}(k, red)$ state, the expected number of blocks that the active miner, say $M_i$, with mining power $\lambda_i$, mines among $T-t_{pub}$ blocks is $(T-t_{pub})*\lambda_i/\lambda_{col}$.

       This scenario can be viewed as a series of $T - t_{pub})$ Bernoulli trials with success probability $\lambda_i/\lambda_{col}$. The number of successes is therefore Binomially distributed, and the expected number of blocks miner $M_i$ creates is $(T-t_{pub})*\lambda_i/\lambda_{col}$.

       According to the contract in Algorithm~\ref{algo:M2MBA} for each block miner would receive the bribe of $f^{dep}_A *\frac{\lambda_{col}}{\lambda_i}*\frac {1}{T-t_{pub}} + \epsilon$ for censoring $tx^{dep}_A$. So the net bribe that an active miner with mining power $\lambda_i$ earn is $(\frac{\lambda_{i}}{\lambda_{col}}*(T-t_{pub}))*(f^{dep}_A *\frac{\lambda_{col}}{\lambda_i}*\frac {1}{T-t_{pub}} + \epsilon)$. This equation boils down to  $f^{dep}_A+(\frac{\lambda_{i}}{\lambda_{col}}*(T-t_{pub}))*\epsilon$. So for any $\epsilon>0$ miner would earn more than $f^{dep}_A$.

       Furthermore, the contract in the algorithm\ref{algo:M2MBA} (line 26) ensures that the amount left with the bribing miner after paying bribe to all the active miners, {\em including} himself is positive. Hence the resulting gain for the miner would always be greater than $f^{dep}_A$. 
       
       

    \item $Case_2-Passive \text{ } miners: $ For passive miners, we'll support our claim with empirical evidence. The Ethereum fee is less than 0.01\% of the transaction value \cite{EthereumTxFee}. Additionally, the miner with the lowest mining power among the top 50 miners from the production Ethereum network—who collectively contribute 99.98\% of the total mining power—has a mining power of 0.015 \cite{Renoir}. Even the miner with the lowest mining power would have expected earning more than the transaction fee by censoring $tx^{dep}_A$, so our claim that $v^{col}*\lambda_i > f^{dep}_A$ holds, even for the miner with the smallest share of mining power.

   \end{itemize}
\end{proof}

\section{\prot\ Analysis Lemma's proofs}
\label{sec:DEMBAproof}

\noindent{\begin*{\bf Lemma 6}
\label{lm:AliceDominant}
For $v^{ded}>0$, it is dominant move for the Alice to follow \prot\ than deviating from it. 

\end*}


\begin{proof}
    The possible moves for Alice to redeem $C^{col}_A$ are as follows: 1) Release $pre_A$ before round $T$, where she would  earn $v^{dep}$ and $v^{col}_A$ (potential transition to $nred\text{-}AB$ or $nred\text{-}ABT$, based on when Bob reveal $pre_B$), 2) Release $pre'_A$ after round $T$, where she would earn $v^{col}_A$, but ends up paying more transaction fee (from eq.~\ref{eq:fee}) and 3) Release both $pre'_A$ and $pre_A$ after round $T$ where she would earn $v^{col}_A-v^{ded}$ (potential transition to $nred\text{-}AA'B$ or $nred\text{-}AA'BT$. 
    
    The second scenario occurs when Alice is genuinely offline and misses the deadline $T$. In this case, \prot\ still imposes a penalty by requiring her to pay a higher fee for the transaction that reveals $pre_A'$ after round $T$. The third scenario involves potential malicious behavior by Alice, where she pretends to have already released $pre_A$ despite not having done so, intending to deliberately burn $v^{dep}$. However, it becomes evident that this approach results in Alice earning an $v^{ded}$ less than if she had acted honestly.

    Based on the above discussion, we conclude that in the states $nred\text{-}AB$ or $nred\text{-}ABT$, Alice's earnings are higher compared to other states in the game.


    
\end{proof}

\noindent{\begin*{\bf Lemma 7}
\label{lm:BobDominant}
For $v^{ded}>0$, it is dominant move for the Bob to follow \prot\ than deviating from it.

\end*}

\begin{proof}
    The possible moves for Bob to redeem $C^{col}_B$ is to release $pre_B$. However, Bob can do so anytime. So possible malicious behavior from Bob is to deliberately delay this release. We know that delaying it beyond round $T$ would make him earn $v^{col}_B-v^{ded}$ which is less than $v^{col}_B$, i.e. the value he would have earned by releasing it before $T$.

    Based on the above discussion, we conclude that in the states $nred\text{-}AB$, $nred\text{-}A'B$ and $nred\text{-}AA'B$, Bob's earnings are higher compared to other states in the game.
\end{proof}

\noindent{\begin*{\bf Lemma 8}
\label{lm:MinerDominant}
For $\alpha<1$, it is dominant move for the miners to follow \prot\ than deviating from it.

\end*}
\begin{proof}
    A potential malicious behavior by the miner is to exclude Alice's and Bob's transactions, which reveal $pre_A$ and $pre_B$, respectively, from the block—either indefinitely or until round $T$. However, since \prot\ transactions offer higher fees than unrelated transactions (by assumption), the case of indefinite exclusion is unlikely. Moreover, from Eq.~\ref{eq:minerFee}, it is evident that the miner earns less by including \prot\ transactions after round $T$.


    
    We can quantify this further as, for $t > T$, the miner earns $\alpha^t \cdot f^{col}_{pre_A}$ by including $tx^{col}_{pre_A}$ and $\alpha^t \cdot f^{col}_{pre_B}$ for $tx^{col}_{pre_B}$. Thus, timely inclusion of transactions ensures greater profitability for the miner.

    Based on the above discussion, we conclude that in the states $nred\text{-}AB$, $nred\text{-}A'B$ and $nred\text{-}AA'B$, miner's earnings are higher compared to other states in the game.
\end{proof}

\section{M2MBA Smart Contract}
\label{sec:M2MBASmartContract}

This section describes a smart contract designed for executing M2MBA, where the initialization method in Algorithm~\ref{algo:Bribery-Contract} is replaced with {\em lockCollateral}. Each participating miner locks their collateral, which is refunded either upon the successful completion of the attack or when {\em refundToMiners} is invoked after the timeout $T$ (as specified in line 37 of Algorithm~\ref{algo:M2MBA}).

Similar to Algorithm~\ref{algo:Bribery-Contract}, the {\em requestBribe} method allows miners to reserve their future bribe by committing to censor $tx_A^{dep}$ before the attack succeeds. Upon successful completion of the attack, the {\em claimBribe} method enables the actual bribe transfer from Bob’s deposit to the participating miners.

\begin{algorithm}[h!]
  \caption{M2MBA Contract ($C_{M2M}$)}\label{algo:M2MBA}
  \begin{algorithmic}[1]
    \State {\bf Variables: }
    \State He-Col: Address of He-col contracts in He-HTLC
    \State $v^{col} \gets 0$, $Bal_A \gets \{\}$, $count \gets 0$.
    \State $LockCol \gets \{\}$ {\color {gray}//maps from balance to address}\\
    
    \State {\bf Constants: }
    \State $br = f^{dep}_A * \frac{\lambda_{col}}{\lambda_i}*\frac {1}{T-t_{pub}} + \epsilon$\\

    \State {\bf Functions: }
    \State {\color {gray}//Called while contract deployment and amount ($v^{col}$) locking}

    \Procedure{lockCollateral}{ $val$ }
        \State $LockCol[Caler]  \gets val$
    \EndProcedure\\
    
    \State {\color {gray}//Called by the miners to lock their bribe}
    \Procedure{requestBribe}{ }
        \If{Caller is not the miner of the current block {\bf or} Function is called more than once in the block}
        \State {\bf return}
        \EndIf

        \State $Bal_A[caller] \gets Bal_A[caller] +1$, $count \gets count +1$ 
    \EndProcedure\\

    \State {\color {gray}//Anyone can call to claim their bribe}
    \Procedure{claimBribe}{$preImage_A$}
      
        \State $Addr_M$ address of miner who redeem He-Col
        
        \If{$H(preImage_A)$ is not a $H(pre_A)$ {\bf or}  $tx^{dep}_A$ is included in the block $\leq T$ {\bf or} He-Col is not redeemed till the current block through Col-M {\bf or} $v^{col}-count*br+Bal_A[Addr_M]*br < 0$}
            \State {\bf return}
        \EndIf
    
        

        \State{\color {gray}//Bribe for censoring $tx^{dep}_A$}
        \For{Each $(addr, bal) \in Bal_A$}
            \State Send $br*bal$ token of bribe  to $addr$
            \State $LockCol[Addr_M]-=br*bal$

        \EndFor


        \State{\color {gray}//Incentive to the caller for calling ClaimBribe}
        \State send $br>f$ to caller
        , $LockCol[Addr_M]-=br$
        
        \State{\color {gray}//Send remaining and locked collateral to miners}
        \For{Each $(addr, bal) \in LockCol$}
            \State Send $bal$ token  to $addr$
        \EndFor
        \State $LockCol \gets \{\}$, $Bal \gets \{\}$, $v^{col} \gets 0$ {\color {gray}//Reset values}
        
    \EndProcedure\\

    \State {\color {gray}//Refund locked collaterals to the miners}
    \Procedure{refundToMiners}{ $val$ }
        \If{$tx^{dep}_A \in$ block with block height $\leq T$}
            \For{Each $(addr, bal) \in LockCol$}
            \State Send $bal$ token to $addr$
        \EndFor
        \EndIf
    \EndProcedure\\
  \end{algorithmic}
\end{algorithm}

\end{appendices}

\end{document}